\newtheorem{theorem}{Theorem}
\newtheorem{proposition}[theorem]{Proposition}
\newtheorem{lemma}[theorem]{Lemma}
\theoremstyle{definition}
\newtheorem{definition}[theorem]{Definition}
\newtheorem{remark}[theorem]{Remark}
\newtheorem{example}[theorem]{Example}
\newcommand{\ketbra}[1]{|{#1}\>\mkern-4mu\<{#1}|}
\newcommand{\tr}{\textup{Tr}}
\renewcommand{\>}{\rangle}
\newcommand{\<}{\langle}
\newcommand{\N}{{\mathbb{N}}} %
\newcommand{\E}{{\mathbb{E}}}
\newcommand{\negl}{{\mathrm{negl}}}
\newcommand{\poly}{{\mathrm{poly}\,}}
\newcommand{\rand}{{\mathtt{r}}}
\newcommand{\algo}{{\mathcal{A}}}
\newcommand{\CQT}{Centre for Quantum Technologies, National University of Singapore, 3 Science Drive 2, Singapore 117543.\looseness=-1}
\newcommand{\NTU}{Nanyang Quantum Hub, School of Physical and Mathematical Sciences, Nanyang Technological University, Singapore 639673.\looseness=-1}
\newcommand{\ihpc}{Institute of High Performance Computing (IHPC), Agency for Science, Technology and Research (A*STAR), 1 Fusionopolis Way, $\#$16-16 Connexis, Singapore 138632, Republic of Singapore.}
\newcommand{\qinc}{
Quantum Innovation Centre (Q.InC), Agency for Science Technology and Research (A*STAR), 2 Fusionopolis Way, Innovis $\#$08-03, Singapore 138634, Republic of Singapore. }
\begin{document}

\normalem
\newlength\figHeight 
\newlength\figWidth 

\title{Near-Term Pseudorandom and Pseudoresource Quantum States}

\author{Andrew Tanggara}
\email{andrew.tanggara@gmail.com}
\affiliation{\CQT}
\affiliation{\NTU}

\author{Mile Gu}
\email{mgu@quantumcomplexity.org}
\affiliation{\NTU}
\affiliation{\CQT}

\author{Kishor Bharti}
\email{kishor.bharti1@gmail.com}
\affiliation{\qinc}
\affiliation{\ihpc}

\date{\today}

\begin{abstract}
A pseudorandom quantum state (PRS) is an ensemble of quantum states indistinguishable from Haar-random states to observers with efficient quantum computers.
It allows one to substitute the costly Haar-random state with efficiently preparable PRS as a resource for cryptographic protocols, while also finding applications in quantum learning theory, black hole physics, many-body thermalization, quantum foundations, and quantum chaos.
All existing constructions of PRS equate the notion of efficiency to quantum computers which runtime is bounded by a polynomial in its input size.
In this work, we relax the notion of efficiency for PRS with respect to observers with near-term quantum computers implementing algorithms with runtime that scales slower than polynomial-time.
We introduce the \textit{$\mathbf{T}$-PRS} which is indistinguishable to quantum algorithms with runtime $\mathbf{T}(n)$ that grows slower than polynomials in the input size $n$.
We give a set of reasonable conditions that a $\mathbf{T}$-PRS must satisfy and give two constructions by using quantum-secure pseudorandom functions and pseudorandom functions.
For $\mathbf{T}(n)$ being linearithmic, linear, polylogarithmic, and logarithmic function, we characterize the amount of quantum resources a $\mathbf{T}$-PRS must possess, particularly on its coherence, entanglement, and magic.
Our quantum resource characterization applies generally to any two state ensembles that are indistinguishable to observers with computational power $\mathbf{T}(n)$, giving a general necessary condition of whether a low-resource ensemble can mimic a high-resource ensemble, forming a \textit{$\mathbf{T}$-pseudoresource} pair.
We demonstate how the necessary amount of resource decreases as the observer's computational power is more restricted, giving a $\mathbf{T}$-pseudoresource pair with larger resource gap for more computationally limited observers.
\end{abstract}

\maketitle

True randomness is a costly resource that lies at the foundation of many information processing tasks, including probabilistic computation and cryptography.
However to an observer with limited computational resource, one may design an object that looks random to this observer, mimicking a truly random object.
In quantum information processing, the Haar-random state is a truly random ensemble of quantum states that requires exponential time to generate.
A pseudorandom quantum state (PRS)~\cite{ji2018pseudorandom}, on the other hand, is an ensemble of quantum states which can be efficiently generated, but is indistinguishable from Haar-random quantum states by any efficient quantum algorithms up to a negligible probability, even given multiple copies of them (see Fig.~\ref{fig:indistinguishability}).
Since its inception in~\cite{ji2018pseudorandom}, many other constructions of PRS and its variants has been proposed~\cite{brakerski2019pseudo,brakerski2020scalable,aaronson2022quantum,lu2024quantum,giurgica2023pseudorandomness,ananth2022cryptography,ananth2022pseudorandom,bansal2024pseudorandom,morimae2025quantum,grilo2025quantumpseudoresourcesimplycryptography} and has direct application as cryptographic primitives~\cite{morimae2022quantum,ananth2022cryptography,morimae2025quantum,grilo2025quantumpseudoresourcesimplycryptography}, as well as applications in quantum learning theory~\cite{huang2022quantum}, black hole physics~\cite{bouland2019computational,engelhardt2025cryptographic,yang2025complexity}, many-body thermalization~\cite{feng2025dynamics}, and quantum chaos~\cite{gu2024simulating}.
On the other hand, its connections to quantum foundations such as entanglement~\cite{aaronson2022quantum,goulao2024pseudo,grewal2024pseudoentanglement,cheng2024pseudoentanglement}, coherence~\cite{haug2023pseudorandom}, and magic~\cite{Gu_2024} are also intriguing, particularly on how it can mimic high-resource states while actually possessing only a low amount of them, acting as a \textit{pseudoresource}~\cite{haug2023pseudorandom,bansal2024pseudorandom}.

Existing results on PRS equates the notion of computational efficiency for its indistinguishability to quantum algorithms running at most in polynomial-time in the number of qubits $n$ of the PRS.
Such pseudorandomness in the classical regime over bitstrings with respect to polynomial-time algorithms is widely applicable, as large-scale classical computers that can run them are widely available.
However, quantum computers are much more restrictive today where implementation of quantum algorithms only available for small instances, thus limiting the use of PRS.
With this problem in mind, we raise the questions of: 
How do one construct a PRS which is indistinguishable to small-scale quantum computers?
What are the properties of such PRS constructions computationally?
What quantum properties do these PRS have?
Do these relaxed PRS constructions require lesser resource?
Can they mimic entanglement, magic, and coherence using lesser resource than polynomial-time PRS?

In this work, we address these questions by proposing a framework that relaxes the polynomial-time computational indistinguishability of the usual notion of PRS to indistinguishability for observers with more restrictive computational resource.
We define the \textit{$\mathbf{T}$-PRS}, an ensemble of states indistinguishable from Haar-random states to quantum algorithms which runtime is bounded by a function that belongs to a family $\mathbf{T}$ which scales slower than polynomials.
As in the usual polynomial-time PRS, the indistinguishability property of $\mathbf{T}$-PRS holds up to a negligibly small probability, even when multiple copies are given to the distinguisher algorithm.
We characterize the negligibly small probability and how many copies of states can the algorithm process such that it cannot arbitrarily increase its probability of distinguishing the $\mathbf{T}$-PRS from Haar-random states, given its $\mathbf{T}$-bounded runtime.
Using these characterizations, we give two explicit constructions of $\mathbf{T}$-PRS by using quantum-secure pseudorandom permutations and quantum-secure pseudorandom functions, inspired by constructions in~\cite{aaronson2022quantum,giurgica2023pseudorandomness}.
For these constructions we consider $\mathbf{T}$ as a function $f(n)$ and polynomials of a function $\poly f(n)$, where $n$ is the number of qubits of the PRS.

We then analyze pair of quantum state ensembles indistinguishable to $\mathbf{T}$-bounded observers, one possessing high-resource and the other low-resource, which we call a \textit{$\mathbf{T}$-pseudoresource} pair.
For observers with quantum algorithms which runtime is bounded by function $\mathbf{T}(n)$ given by linearithmic ($O(n\log n)$), linear ($O(n)$), polylogarithmic ($O(\poly\log n)$), and logarithmic ($O(\log n)$) functions, we show that the necessary amount of resource (entanglement, coherence, and magic) in the low-resource ensemble decreases with $\mathbf{T}(n)$.
Since the $\mathbf{T}$-PRS are indistinguisable from Haar-random ensemble to size-$\mathbf{T}$ circuits, they are able to mimic high amount of entanglement, coherence, and magic of the Haar-random ensemble, with smaller amount of these resources compared to previous constructions of PRS.
We show the pseudoresource gaps between $\mathbf{T}$-PRS and Haar-random ensemble for different $\mathbf{T}$.
Compared to the recently proposed pseudorandom density matrices (PRDM)~\cite{bansal2024pseudorandom} which mimic high amount of entanglement, coherence, and magic with zero amount of these resources, the gap between perceived and actual resource of $\mathbf{T}$-PRS lies in between that of PRDM and PRS.

Below we give an outline of this paper.
In Section~\ref{sec:pseudorandomness_and_indistinguishability}, we lay out the framework to define the notion of pseudorandomness and indistinguishability with respect to observers with limited computational resource characterized by class of function $\mathbf{T}$.
Particularly, we discuss how the negligible probabilities with respect to $\mathbf{T}$ can be defined in Section~\ref{sec:negligible_distinguishability} to define the notion of computational indistinguishability wiht respect to $\mathbf{T}$, and finally $\mathbf{T}$-PRS in Section~\ref{sec:T-pseudorandomness}.
In Section~\ref{sec:TPRS_constructions}, we give two constructions of $\mathbf{T}$-PRS inspired by the subset phase state~\cite{aaronson2022quantum} and subset state~\cite{giurgica2023pseudorandomness}.
In Section~\ref{sec:T-pseudoresources}, we discuss pseudoresource state ensembles with respect to the observer's computational power characterized by $\mathbf{T}$.
Here we give a lower bound on the expected amount of resource of the low-resource ensemble and an upper bound on resource gap between the high and low-resource ensembles for coherence (Section~\ref{sec:coherence}), entanglement (Section~\ref{sec:entanglement}), and magic (Section~\ref{sec:magic}).

\section{Computational Pseudorandomness and Indistinguishability}\label{sec:pseudorandomness_and_indistinguishability}

\begin{figure}
    \centering
    \includegraphics[width=0.8\columnwidth]{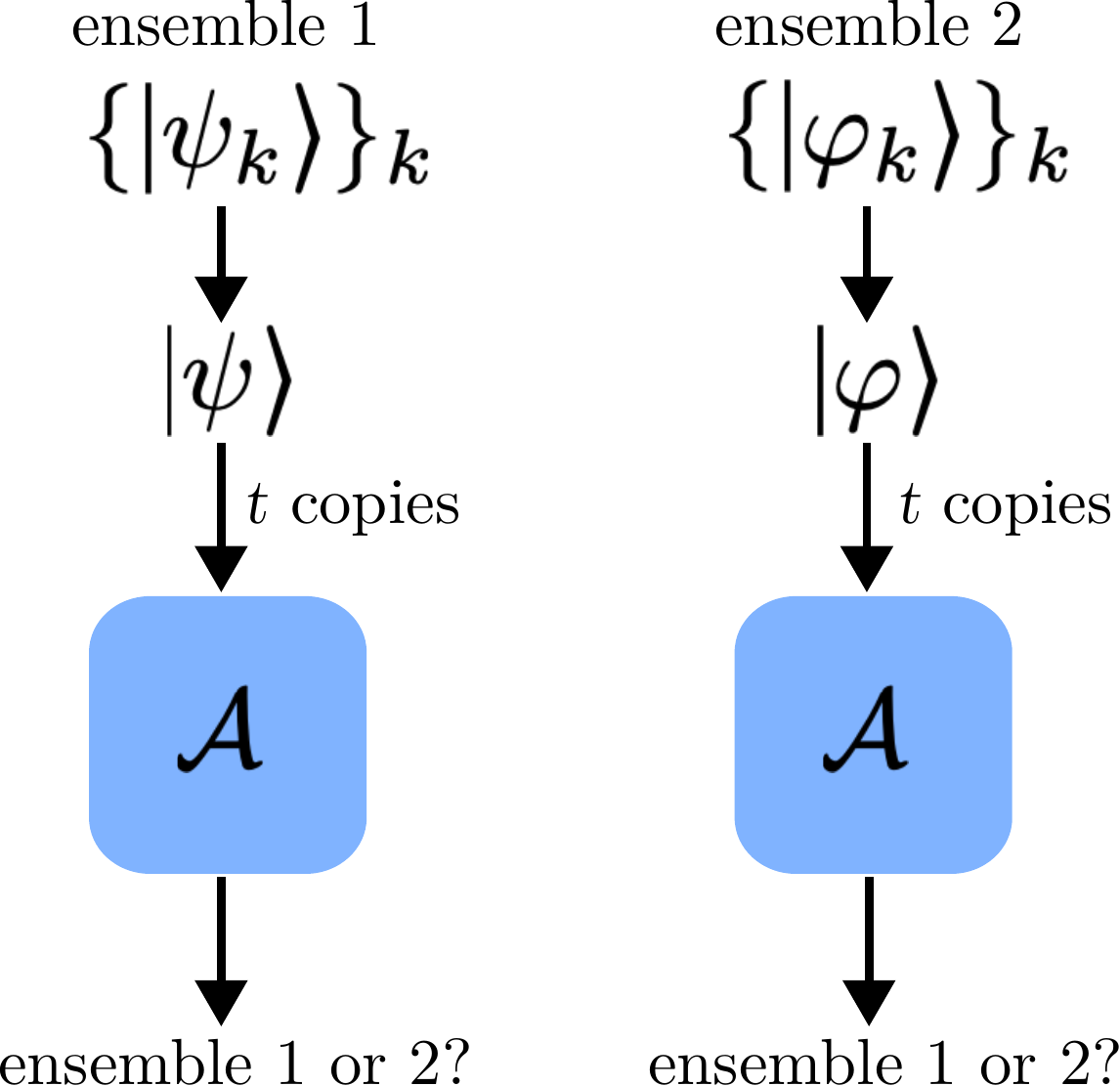}
    \caption{In this illustration, we consider $\mathbf{T}$-indistinguishable $n$-qubit pair of ensembles $\{|\psi_k\>\}_k$ and $\{|\varphi_k\>\}_k$ (as defined in Section~\ref{sec:T-pseudorandomness}).
    A quantum algorithm $\mathcal{A}$ is given input of either $t$ copies of $|\psi\>$ randomly sampled from $\{|\psi_k\>\}_k$ or $t$-copies of $|\varphi\>$ randomly sampled from $\{|\varphi_k\>\}_k$ such that it outputs $\mathcal{A}(\E_k[|\psi_k\>])\in\{0,1\}$ or $\mathcal{A}(\E_k[|\psi_k\>])\in\{0,1\}$ indicating which ensemble the input state belongs to.
    If the runtime of $\mathcal{A}$ given $t$ copies of $n$-qubit state input is given by $s(n)\in O(\mathbf{T}(n))$, then $\mathcal{A}$ cannot guess which ensemble the input belongs to, expect for a negligible probability (as defined in Section~\ref{sec:negligible_distinguishability}).
    }
    \label{fig:indistinguishability}
\end{figure}

Randomness is widely associated with the degree of uniformity in the frequency of each possible output from a particular source.
A source with perfect uniformity in the frequency of its outputs, therefore is a perfectly random source.
Pseudorandomness, on the other hand, is a source which is not perfectly random but is indistinguishable from a perfect random source.
Statistically, this can be quantified by ``how far'' the distribution of a source is from the uniform distribution or by how different the characters of these two distributions are.
However these statistical measures do not take into account the \textit{computational cost} of distinguishing such distributions.

In quantum systems, the objects that one concerns with are quantum states.
A perfect randomness can therefore be associated with an ensemble of quantum states which is distributed uniformly over all possible quantum states for a given system dimension.
This is captured by the Haar-random quantum state $\{|\varphi\>\}_\varphi$.
Thus a \textit{pseudorandom quantum state} (PRS) is a quantum state ensemble $\{|\psi\>\}_\psi$ which is indistinguishable from Haar-random state to observers with bounded computational resource.
The notion of indistinguishability holds up to some \textit{neglibible} probability even if the observers are given a bounded number of copies of states allowed by how much computational resource it has.

The usual $n$-qubit PRS is concerned with an observer with an access to a quantum computer to implement any quantum algorithm $\mathcal{A}$ which runtime is bounded by some polynomial in $n$.
Here for $t$-copies of $n$-qubit input state $|\tau\>^{\otimes t}$ to $\mathcal{A}$, it spits out output $\mathcal{A}(|\tau\>^{\otimes t})$ which is either 0 or 1, indicating whether it received PRS inputs ($\tau=\psi$) or Haar-random inputs ($\tau=\varphi$).
Given polynomially many $t(n)\in O(\poly n)$ copies of PRS $|\psi\>$ and Haar-random states $|\varphi\>$ over an $n$-qubit system, the probability of $\algo$ distinguishing $|\psi\>^{\otimes t(n)}$ and $|\varphi\>^{\otimes t(n)}$ is negligible:
\begin{equation}\label{eqn:polytime_indistinguishability}
    \Big|\Pr_\psi[\algo(|\psi\>^{\otimes t(n)})=1] - \Pr_\varphi[\algo(|\varphi\>^{\otimes t(n)})=1]\Big| < \eta(n)
\end{equation}
where $\eta$ is a negligible function, i.e. a function that scales slower than $\frac{1}{g(n)}$ for all polynomial $g\in\poly$.
For an illustration of this scenario see Fig.~\ref{fig:indistinguishability}.
We denote the set of all such function $\eta$ as $\negl_\poly$.
Note that in eqn.~\eqref{eqn:polytime_indistinguishability}, the input size to algorithm $\algo$ is $N=nt(n)$ (i.e. $t(n)$ copies of $n$-qubit states), which is polynomial in $n$.
Hence if runtime of $\algo$ is a polynomial $s(N)$ in $N$, then it runs for $s(nt(n))$ given input $|\psi\>^{\otimes t(n)}$ or $|\varphi\>^{\otimes t(n)}$, where $s(nt(n))$ is a polynomial in $n$ since composition of polynomials is a polynomial.
Note how here we distinguish between the number of qubits $n$ of \textit{individual} state $|\psi\>$ and the total number of qubits $N$ of \textit{multiple copies} of states $|\psi\>^{\otimes t}$.

For the rest of this section we will build a formal framework generalizing the polynomial-time indistinguishability discussed above.
In particular, we want to consider observers with different computational power by replacing algorithms with runtime bounded by a polynomial by those which runtime if bounded by some nondecreasing function $s:\N\rightarrow\N$ belonging a class of functions $\mathbf{T}$.
Namely, we want the runtime of $\mathcal{A}$ on $N$-qubit input to be bounded by $s(N)\in O(\mathbf{T}(N))$ \footnote{
We give some explanation for our notation.
For arbitrary class of functions $\mathbf{T}$ we sometimes write $\mathbf{T}(n)$ instead of $\mathbf{T}$ to emphasize the variable of the functions in $\mathbf{T}$, i.e. function $f(n)$ in $\mathbf{T}(n)$.
Also for some parts in the rest of the paper, we often write arithmetic operations over set of functions to indicate arithmetic operations over arbitrary function in these sets, which is a standard convention in writing asymptotics.
For example, we write $O(f(n)) + \negl_\poly = o(l(n))$ when we mean $g(n) + h(n) = q(n)$ for some $g\in O(f(n))$, $h\in \negl_\poly$, and $q\in o(l(n))$.
In some parts, we write the latter to make clearer arguments, however in parts where the context is clear we write in the former.
We also use this notation on arbitrary class of functions $\mathbf{T}$ (e.g. $O(\mathbf{T}) + \negl_\mathbf{T}$).
Note also that sometimes $\mathbf{T}$ may indicate a single function, e.g. $\mathbf{T}=\log n$, as opposed to a family of function as in the case of $\mathbf{T}=\poly n$ where $\mathbf{T}$ is the set of all polynomials in $n$.
}.
Hence the number of copies $t(n)$ of $n$-qubit states $|\tau\>$ must satisfy $s(N)=s(nt(n)) \in O(\mathbf{T}(n))$.
With such observers, then we need to formulate a different notion of negligibility that still impose the restriction that any such observer with computational resource bounded by $\mathbf{T}$ cannot arbitrarily increase the probability of distinguishing $n$-qubit states $|\psi\>$ and $|\varphi\>$ given $t(n)$ copies of them.
Therefore in summary, for a class of function $\mathbf{T}$ we want to impose these requirements on the observer's computational power, negligibility, and number of copies with respect to the number of qubits $n$ of an individual copy of states in question:
\begin{enumerate}

    \item Given any number of copies $t(n)$ of any $n$-qubit state $|\tau\>$ as an input to any algorithm $\algo$ chosen by the observer, the runtime of computing its output $\algo(|\tau\>^{\otimes t(n)})$ is bounded by $s(n) \in O(\mathbf{T}(n))$.

    \item For the observer with computational resource bounded by $\mathbf{T}$, the negligible probability $\eta(n)$ of its chosen algorithm $\mathcal{A}$ distinguishing $|\psi\>$ and $|\varphi\>$ given $t(n)$ copies of them is preserved even when composing $\mathcal{A}$ with other algorithm $\mathcal{A}'$ it has access to or by running $\mathcal{A}$ repeatedly with total runtime still bounded as $O(\mathbf{T}(n))$.
\end{enumerate}
After we have these requirements characterized, we then introduce the $\mathbf{T}$-PRS, quantum state ensembles which are indistinguishable from Haar-random states to observers with access to quantum algorithms which runtime is bounded by $s(n)\in O(\mathbf{T})$.

\subsection{Negligible distinguishability}\label{sec:negligible_distinguishability}

For polynomial-time observers, the corresponding set of negligible functions $\negl_\poly$ are chosen such that when the observer repeat the experiment polynomial number of times (since it has access to polynomial-depth algorithms), it cannot arbitrarily increase the probability of distinguishing $|\psi\>^{\otimes t(n)}$ and $|\varphi\>^{\otimes t(n)}$.
This notion, which motivates the polynomial-time indistinguishability of eqn.~\eqref{eqn:polytime_indistinguishability}, consists of two components: (1) a set of functions $\mathbf{N}$ which signifies negligible probability of distinguishability and (2) another set of \textit{repeat functions} $\mathbf{R}$ which signifies how many repetition of the experiment is allowed.
We will come back later to which set of repeat function $\mathbf{R}$ is allowed for different observers.
Formally, the aforementioned two criteria for neglibility can be stated as the following \textit{closure properties}.

\begin{definition}[{Closure properties}]\label{def:closure_property}
    Consider a pair of sets $\mathbf{N}$ and $\mathbf{R}$ of non-decreasing functions $g:\N\rightarrow\N$.
    We say that $\mathbf{N}$ satisfy the \textit{closure properties} with respect to \textit{repeat functions} $\mathbf{R}$ if for all $\eta_1,\eta_2\in\mathbf{N}$, it holds that:
    \begin{enumerate}
        \item $\eta_1(n) + \eta_2(n) \in\mathbf{N}$, and
        \item $r(n) \eta_1(n) \in\mathbf{N}$ for any $r\in\mathbf{R}$.
    \end{enumerate}
\end{definition}

\begin{remark}
    The first closure property concerns two algorithms $\algo$ and $\algo'$ with probabilities of distinguishing PRS and Haar-random states bounded by $\eta_1$ and $\eta_2$, respectively (in the sense of eqn~\eqref{eqn:polytime_indistinguishability}).
    This property guarantees that the two algorithms combined together still give a negligible probability.
    Particularly if we denote the event $S$ as a successful distinction from algorithm $\algo$ and event $S'$ for algorithm $\algo'$, the probability of one or both of them being successful is
    \begin{equation}
        \Pr[S\vee S'] \leq \Pr[S]+\Pr[S'] < \eta_1(n)+\eta_2(n) \in\mathbf{N}
    \end{equation}
    where the first inequality is given by the union bound.
    
    On the other hand, the second property concerns an algorithm $\algo$ that is run repeatedly $r(n)$ number of times.
    This property guarantees that whenever the probability of successfully distinguishing PRS and Haar-random states $\Pr[S]$ is bounded above by $\eta_1$, repeating it $r(n)\in\mathbf{R}$ many times still give a negligible probability.
    More precisely by using the union bound, the probability of some repetition of experiment being successful is
    \begin{equation}
        \Pr[ S_1 \vee\dots\vee S_{r(n)}] \leq \sum_{i=1}^{r(n)} \Pr[S_i] < r(n)\eta_1(n) \in\mathbf{N}
    \end{equation}
    where $S_i$ indicates the event of successful distinction in the $i$-th repetition of the experiment.
\end{remark}

In classical cryptography, negligible functions $\mathbf{N}=\negl_\poly$ with respect to polynomial-time observers has been shown in~\cite[Proposition 3.6]{katz2007introduction} to satisfy the closure properties with respect to $\mathbf{R}=O(\poly(n))$.
Here a polynomial-time observer may repeat the experiment any polynomial number of times, since product between any two polynomials is a polynomial.
Therefore, the closure properties prohibit any such observers from arbitrarily increasing the probability of distinguishing PRS from Haar-random states.
For more discussion on indistinguishability in classical pseudorandomness, see~\cite[Chapter 8.8]{katz2007introduction}.

Now we formalize the notion of negligibility with respect to a class of functions $\mathbf{T}$.

\begin{definition}[{$\mathbf{T}$-negligible functions}]\label{def:negligible_function}
    For a set of $\N\mapsto\N$ functions $\mathbf{T} \subseteq \{f:\N\rightarrow\N\}$, a function $\eta:\N\rightarrow[0,1]$ is $\mathbf{T}$-\textit{negligible} whenever for all $g\in\Theta(\mathbf{T})$ it holds that
    \begin{equation}
        \eta(n) < \frac{1}{g(n)}
    \end{equation}
    for all but finitely many $n\in\N$.
    The set of all $\mathbf{T}$-negligible functions is denoted as $\negl_\mathbf{T}$.
\end{definition}

Before we discuss how $\mathbf{T}$-negligible functions $\negl_\mathbf{T}$ plays a role in formulating indistinguishability for observers with different computational power, we will go through a few examples of negligible functions.

\begin{remark}[Polynomial-time negligible functions]\label{rem:poly_negl_function}
    For $\mathbf{T}=\poly n$, we show that we will recover the usual negligible function with respect to a polynomial-time algorithms.
    Namely a function $\eta\in\negl_\poly$ satisfies $\eta(n) < g(n)^{-1}$ for all $g\in\Theta(\poly n)$.
    Since for all such function $g$ there exists some $c>0$ and $N\in\N$ such that $n\geq N \Rightarrow g(n)\leq n^c$, therefore $\eta$ must satisfy
    \begin{equation}
        \eta(n) < \frac{1}{n^c} \;.
    \end{equation}
    for all $c>0$ and all but finitely many $n$.
\end{remark}

\begin{example}[Linearithmic-time negligible functions]
    For $\mathbf{T}=n\log n$, a function $\eta\in\negl_{n\log n}$ satisfies $\eta(n) < g(n)^{-1}$ for all $g\in \Theta(n\log n)$.
    In this case, for all such function $g$ there exists some $c>0$ and $N\in\N$ such that $n\geq N \Rightarrow g(n)\leq c n\log n$.
    Thus, $\eta$ must satisfy
    \begin{equation}
        \eta(n) < \frac{1}{c n\log n}
    \end{equation}
    for all $c>0$ and for all but finitely many $n$.
\end{example}

\begin{example}[Polylog-time negligible functions]
    For $\mathbf{T}=\poly\log n$, a function $\eta\in\negl_{\poly\log n}$ satisfies $\eta(n) < g(n)^{-1}$ for all $g\in \Theta(\poly\log n)$.
    Thus for all $c>0$, a $\poly\log n$-negligible function $\eta$ must satisfy
    \begin{equation}
        \eta(n) < \frac{1}{(\log n)^c}
    \end{equation}
    for all but finitely many $n$.
\end{example}

Note that for polynomial-time negligible functions, we can set the set of repeat functions $\mathbf{R}$ as the set of polynomially bounded functions, i.e. $\mathbf{R}=O(\poly n)$.
This makes sense since the observer is restricted to algorithms $\algo(|\tau\>)$ which runtime is bounded by some polynomial in $n$ (for some arbitrary state $|\tau\>$).
Now for observers with access to algorithms with runtime bounded by function $s\in O(\mathbf{T})$ this is not necessarily true.
For example, for $O(\mathbf{T})=O(n)$ we have an algorithm $\mathcal{A}(|\tau\>)$ that runs in time $s(n)\in O(n)$.
If we set $\mathbf{R} = O(n)$ and construct an algorithm $\algo'$ which repeats $\mathcal{A}(|\tau\>)$ by $r(n)=n$ times (hence taking input of $r(n)$ copies of $|\tau\>$), then $\algo'$ has a total runtime of $r(n)s(n) = ns(n) \in O(n^2)$.
This is not allowed since the observer is restricted only to algorithms that runs in $O(n)$.
Thus we want $\mathbf{R}$ to be the set of functions that signify the most number of repetition of any $O(\mathbf{T})$ time algorithm that the observer can do.
We now formulate this additional criteria.

\begin{definition}[{Repetition consistency}]\label{def:repetition_consistency}
    The set of repeat functions $\mathbf{R}$ is \textit{consistent} with $\mathbf{T}$ if for any $s\in O(\mathbf{T})$ we have $r(n)s(n)\in O(\mathbf{T})$ for all $r\in\mathbf{R}$.
\end{definition}

Now we show that $\mathbf{T}$-negligible functions $\negl_\mathbf{T}$ for $O(\mathbf{T})=O(f(n))$ and for $O(\mathbf{T})=O(\poly f(n))$ does have nice properties with respect to some corresponding repeat functions.
Namely they satisfy closure properties criteria in Definition~\ref{def:closure_property} and repetition consistency criteria in Definition~\ref{def:repetition_consistency}.

\begin{proposition}\label{prop:closure_negligible_function}
    The set of $f(n)$-negligible functions $\negl_{f(n)}$ for any non-decreasing, non-constant function $f:\N\rightarrow\N$ satisfy the closure properties with respect to the set of repeat functions $\mathbf{K}$, where $\mathbf{K}$ is the set of constant functions.
    Moreover $\mathbf{K}$ is consistent with $f(n)$.
\end{proposition}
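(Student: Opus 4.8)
The plan is to reduce the entire statement to one elementary fact: for a positive constant $k$, the class $\Theta(f(n))$ is invariant under multiplication by $k$ — if $c_1 f(n)\le g(n)\le c_2 f(n)$ for all but finitely many $n$, then $(kc_1)f(n)\le kg(n)\le(kc_2)f(n)$ for all but finitely many $n$, so $kg\in\Theta(f(n))$. Since $f$ is non-decreasing, non-constant and integer-valued, it is eventually $\ge 1$, hence every $g\in\Theta(f(n))$ is eventually $\ge 1$ and the reciprocals appearing below are well defined for all but finitely many $n$. Everything else is just unpacking Definition~\ref{def:negligible_function}, whose quantifier ranges over \emph{all} $g\in\Theta(f(n))$.

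First I would establish the additive closure property. Fix $\eta_1,\eta_2\in\negl_{f(n)}$ and an arbitrary $g\in\Theta(f(n))$. Since $2g\in\Theta(f(n))$ by the observation above, Definition~\ref{def:negligible_function} applied to each $\eta_i$ with the function $2g$ gives $\eta_i(n)<\tfrac{1}{2g(n)}$ off a finite set; adding the two bounds on the union of the two exceptional sets yields $\eta_1(n)+\eta_2(n)<\tfrac1{g(n)}$ for all but finitely many $n$, and as $g$ was arbitrary this is exactly $\eta_1+\eta_2\in\negl_{f(n)}$. Next, for the repeat-function closure property, take $\eta_1\in\negl_{f(n)}$ and $r\in\mathbf{K}$, say $r(n)=c$; the case $c=0$ is trivial, so assume $c\ge 1$. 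For arbitrary $g\in\Theta(f(n))$ we have $cg\in\Theta(f(n))$, so $\eta_1(n)<\tfrac1{cg(n)}$ for all but finitely many $n$, and multiplying by $c$ gives $r(n)\eta_1(n)<\tfrac1{g(n)}$ for all but finitely many $n$, i.e.\ $r\cdot\eta_1\in\negl_{f(n)}$. Finally, consistency of $\mathbf{K}$ with $f(n)$ (Definition~\ref{def:repetition_consistency}) follows from the same scale-invariance: if $s\in O(f(n))$ and $r(n)=c\in\mathbf{K}$, then $r(n)s(n)=c\,s(n)\in O(f(n))$.

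I do not anticipate a genuine obstacle: the whole point is that constant factors can be absorbed into the $\Theta(f(n))$ over which the negligibility quantifier runs. The only items needing a little care are (i) the bookkeeping of the finitely many exceptional $n$ when intersecting two ``for all but finitely many $n$'' statements, and (ii) a minor codomain caveat in the second property, since $c\,\eta_1(n)$ might exceed $1$ for the finitely many small $n$ — this is handled by taking $\min\{1,c\,\eta_1(n)\}$, which alters nothing asymptotically and is consistent with the fact that $r(n)\eta_1(n)$ is only ever used as an upper bound on a probability (cf.\ the Remark after Definition~\ref{def:closure_property}). It is worth noting — as the discussion preceding the proposition already shows for $f(n)=n$ — that restricting to the constant repeat class $\mathbf{K}$ is essential here: a non-constant $r$ would push $r(n)g(n)$ out of $\Theta(f(n))$, so the argument genuinely relies on $r\in\mathbf{K}$.
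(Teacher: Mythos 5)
Your proof is correct and follows essentially the same route as the paper: both arguments reduce the closure properties and the consistency claim to the fact that positive constant factors are absorbed by the class $\Theta(f(n))$ (the paper phrases this as ``$\eta_i(n)<c_i/f(n)$ for every $c_i>0$'' while you keep the quantifier over $g\in\Theta(f(n))$ and scale $g$ by $2$ or $c$, which is equivalent). Your added care about the exceptional finite sets, the $c=0$ case, and capping $r(n)\eta_1(n)$ at $1$ only tightens details the paper leaves implicit.
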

\begin{proof}
    By the definition of $f(n)$-neglibible function, it holds for $i\in\{1,2\}$ that $\eta_i(n) < \frac{1}{g(n)}$ for all $g(n)\in \Theta(f(n))$.
    Thus for any $c_1,c_2>0$, there exists $N_1,N_2\in\N$ such that $n\geq N_i \Rightarrow \eta_i(n) < c_i/f(n)$.
    We set $N = \max\{N_1,N_2\}$ for each pair of $c_1,c_2$ so that
    \begin{equation}
        \eta_1(n) + \eta_2(n) < \frac{c_1}{f(n)} + \frac{c_2}{f(n)} \;.
    \end{equation}
    Hence for any $c=c_1+c_2>0$ there exists $N\in\N$ such that $n\geq N \Rightarrow \eta_1(n) + \eta_2(n) < c/f(n)$.
    Thus we have shown the first closure property $\eta_1(n) + \eta_2(n) \in\negl_{f(n)}$.

    To show the second closure property, again note that for any $c_1>0$ there exists $N\in\N$ such that $n\geq N \Rightarrow \eta_1(n) < \frac{c_1}{f(n)}$.
    Thus for a constant function $r(n)=c$ for some $c>0$ and for any $c_1$ we have
    \begin{equation}
        r(n)\eta_1(n) < c\frac{c_1}{f(n)}
    \end{equation}
    for all but finitely many $n$.
    Thus $r(n)\eta_1(n)\in \negl_{f(n)}$.

    Lastly to show that $\mathbf{K}$ is consistent with $f(n)$, simply note that for a constant function $r\in\mathbf{K}$ we have $r(n)=c$ for some $c>0$.
    Thus for any $s\in O(f(n))$ we have $r(n)s(n) = cs(n)$, which is in $O(f(n))$.
\end{proof}

\begin{proposition}
    The set of $\poly f(n)$-negligible functions $\negl_{\poly f(n)}$ for any non-decreasing, non-constant function $f:\N\rightarrow\N$ satisfy the closure properties with respect to repeat functions $O(\poly f(n))$. 
    Moreover $O(\poly f(n))$ is consistent with $\poly f(n)$.
\end{proposition}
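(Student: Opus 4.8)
The plan is to follow the structure of the proof of Proposition~\ref{prop:closure_negligible_function}, but replacing the single scale $f(n)$ by the family of scales $f(n)^k$, $k\in\N$. First I would record the one direction of the definition of $\poly f(n)$-negligibility that is actually needed: since for every constant $c>0$ and every $k\in\N$ the function $n\mapsto f(n)^k/c$ lies in $\Theta(\poly f(n))$, any $\eta\in\negl_{\poly f(n)}$ satisfies
\[
  \eta(n) < \frac{c}{f(n)^k}
\]
for all but finitely many $n$, for every such $c$ and $k$. Dually, because $f:\N\to\N$ is non-decreasing and non-constant it is eventually $\ge 1$, so every polynomial in $f(n)$ is $O(f(n)^k)$ for its degree $k$; hence ``$r\in O(\poly f(n))$'' means precisely $r(n)\le C f(n)^m$ for some constants $C>0$, $m\in\N$ and all large $n$, and ``$g\in\Theta(\poly f(n))$'' reduces to a single monomial $f(n)^k$. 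With these two observations the argument becomes the $f(n)$-case of Proposition~\ref{prop:closure_negligible_function} run ``one degree at a time.''

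For the first closure property, given $\eta_1,\eta_2\in\negl_{\poly f(n)}$, a target degree $k$, and a target constant $c>0$, I would apply the characterisation above to each $\eta_i$ with parameters $(c/2,k)$, obtaining thresholds $N_1,N_2$, and conclude $\eta_1(n)+\eta_2(n) < c/f(n)^k$ for $n\ge\max\{N_1,N_2\}$. Since $k$ and $c$ were arbitrary, $\eta_1+\eta_2\in\negl_{\poly f(n)}$. For the second, given $r\in O(\poly f(n))$, write $r(n)\le C f(n)^m$ for $n\ge N_0$; for a target pair $(c,k)$ apply the negligibility of $\eta_1$ at the \emph{shifted} parameters $(c/C,\ k+m)$ to get a threshold $N_1$ with $\eta_1(n) < (c/C)/f(n)^{k+m}$, so that for $n\ge\max\{N_0,N_1\}$,
\[
  r(n)\,\eta_1(n) \le C f(n)^m\cdot\frac{c/C}{f(n)^{k+m}} = \frac{c}{f(n)^k},
\]
hence $r\eta_1\in\negl_{\poly f(n)}$. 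Finally, for repetition consistency (Definition~\ref{def:repetition_consistency}) I would take $s\in O(\poly f(n))$ and $r\in O(\poly f(n))$, write $s(n)\le C_1 f(n)^{m_1}$ and $r(n)\le C_2 f(n)^{m_2}$ for large $n$, and note $r(n)s(n)\le C_1C_2\,f(n)^{m_1+m_2}\in O(\poly f(n))$.

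I do not expect a genuine obstacle: everything is routine asymptotics. The only points needing a little care are the bookkeeping with the two simultaneous quantifiers (constant and degree) — in particular the fact that in the second closure property one must ``spend'' $m$ extra degrees of the negligibility of $\eta_1$ to absorb the polynomial growth of the repeat function $r$, which is exactly where the hypothesis $\mathbf{R}=O(\poly f(n))$ (rather than something larger) is used — together with the mild observation that a non-decreasing, non-constant, integer-valued $f$ is eventually $\ge 1$, which is what legitimises collapsing ``$O(\poly f(n))$'' and ``$\Theta(\poly f(n))$'' to single monomials $f(n)^k$.
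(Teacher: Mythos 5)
Your proposal is correct and follows essentially the same route as the paper's proof: bound each negligible function by an inverse power of $f(n)$, observe that the sum of two such bounds is still inverse-polynomial in $f(n)$, absorb the polynomial growth of the repeat function by spending extra degrees of negligibility, and verify consistency by multiplying monomial bounds. Your version is somewhat more careful with the quantifier bookkeeping (fixing a target pair $(c,k)$ up front), but the substance is identical.
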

\begin{proof}
    By the definition of $\poly f(n)$-neglibible function, it holds for $i\in\{1,2\}$ that $\eta_i(n) < \frac{1}{g(n)}$ for all $g(n)\in \Theta(\poly f(n))$.
    Thus for any $c_1,c_2>0$, there exists $N_1,N_2\in\N$ such that $n\geq N_i \Rightarrow \eta_i(n) < \frac{1}{f(n)^{c_i}}$.
    We set $N = \max\{N_1,N_2\}$ for each pair of $c_1,c_2$ so that
    \begin{equation}
        \eta_1(n) + \eta_2(n) < \frac{1}{f(n)^{c_1}} + \frac{1}{f(n)^{c_2}} \;.
    \end{equation}
    Hence $\eta_1(n) + \eta_2(n)$ is bounded by the inverse of some polynomial in $f(n)$ for all but finitely many $n$.
    Thus we have shown the first closure property $\eta_1(n) + \eta_2(n) \in\negl_{\poly f(n)}$.

    To show the second closure property, again note that for any $c_1>0$ there exists $N\in\N$ such that $n\geq N \Rightarrow \eta_1(n) < \frac{1}{f(n)^{c_1}}$.
    Now note that for a function $r(n)\in O(\poly f(n))$, there exists some $c>0$ and $N\in\N$ such that $n\geq N \Rightarrow r(n) < f(n)^c$.
    So we can pick any $c_1$ larger than $c$ so that we can obtain
    \begin{equation}
        r(n)\eta_1(n) < \frac{1}{f(n)^{c'}}
    \end{equation}
    for all $c'=c_1-c$ and for all but finitely many $n$.
    Thus $r(n)\eta_1(n)\in \negl_{\poly f(n)}$.

    Lastly to show that $O(\poly f(n))$ is consistent with $\poly f(n)$, simply note that for a function $g(n)\in O(\poly f(n))$ there exists some $c>0$ such that $g(n)< f(n)^c$ for all but finitely many $n$.
    Thus for any $r,s\in O(\poly f(n))$ there exists some $c'>0$ such that $r(n)s(n) < f(n)^{c'}$ for all but finitely many $n$, which shows that $r(n)s(n)$ is in $O(\poly f(n))$.
\end{proof}

\subsection{T-Pseudorandom Quantum States}\label{sec:T-pseudorandomness}

Before we define PRS with respect to observers with different computational resource, recall that in the usual polynomial-time PRS, the polynomial-time algorithm $\mathcal{A}$ used by the observer may receive an input of at most polynomially many $t(n)\in O(\poly n)$ copies of $n$-qubit state $|\tau\>$.
As we have discussed the runtime of $\mathcal{A}(|\tau\>^{\otimes t(n)})$ is still bounded by some polynomial, since composition of polynomials is itself a polynomial.
When the observer is restricted to algorithms that runs in $O(\mathbf{T})$, it is not the case in general that the runtime of $\mathcal{A}(|\tau\>^{\otimes t(n)})$ is still bounded by some function $s(n)\in O(\mathbf{T})$ if $t(n)\in O(\mathbf{T})$.
For example if the observer has access to algorithms $\mathcal{A}$ which runtime bounded by some function $s(N)\in O(N)$, where $N$ is the number of input qubits to $\mathcal{A}$ and $t(n)\in O(n)$, then given $t(n)$ copies of $n$-qubit state $|\tau\>$, $\mathcal{A}(|\tau\>^{\otimes t(n)})$ runtime is bounded by $s(nt(n)) \in O(n^2)$.
This is not allowed since we require that the observer only has access to quantum algorithms with runtime bounded by some function in $O(n)$.
To remedy this, we restrict the number of copies $t(n)$ of $n$-qubit state $|\tau\>$ such that $s(nt(n))\in O(\mathbf{T}(n))$ for any $s(N)\in O(\mathbf{T}(N))$.

Putting together this criterion on the number of copies with the criteria for $\mathbf{T}$-negligible functions, we can now formally define what it means for two ensembles to be indistinguishable with respect to $\mathbf{T}$.

\begin{definition}[{$\mathbf{T}$-indistinguishability}]\label{def:T_indistinguishable}
    Two ensembles of $n$-qubit states $\{|\psi\>\}_\psi$ and $\{|\varphi\>\}_\varphi$ are \textit{$\mathbf{T}$-indistinguishable} whenever for any quantum algorithm $\algo$ with $N$-qubit input outputting either 0 or 1 with runtime bounded by function $s(N)\in O(\mathbf{T}(N))$ and for all function $t(n)$ such that $s(nt(n))\in O(\mathbf{T}(n))$, it holds that
    \begin{equation}
        \Big|\Pr_\psi[\algo(|\psi\>^{\otimes t(n)})=1] - \Pr_\varphi[\algo(|\varphi\>^{\otimes t(n)})=1]\Big| < \eta(n) 
    \end{equation}
    for some $\mathbf{T}$-negligible function $\eta\in\negl_\mathbf{T}$.
\end{definition}

Now we give the definition of a $\mathbf{T}$-PRS: a PRS which are indistinguishable from Haar-random states to observers with an access to $O(\mathbf{T})$-time algorithms.

\begin{definition}[{$\mathbf{T}$-pseudorandom quantum states ($\mathbf{T}$-PRS)}]\label{def:T_pseudorandom_states}
    Consider a set of $\N\mapsto\N$ functions $\mathbf{T} \subseteq \{f:\N\rightarrow\N\}$.
    For $n\in\N$, an ensemble of $n$-qubit states $\{|\psi_k\> : k\in \mathcal{K}_n\}$ over \textit{keyspace} $\mathcal{K}_n$ with $|\mathcal{K}_n|=l(n)\in O(\mathbf{T}(n))$ is a $\mathbf{T}$-\textit{pseudorandom state} ($\mathbf{T}$-PRS) if it satisfies:
    \begin{enumerate}
        \item There exists a uniform quantum circuit $\{G_n\}_n$ with size $g(n)\in O(\poly n)$ that outputs an $n$-qubit quantum state $G_n(k)=|\psi_k\>$ given input $k$.

        \item Ensemble $\{|\psi\>\}_\psi$ and $n$-qubit Haar-random state ensemble $\{|\varphi\>\}_\varphi$ are $\mathbf{T}$-indistinguishable as defined in Definition~\ref{def:T_indistinguishable}.

        \item The set of negligible functions $\negl_\mathbf{T}$ must satisfy the closure properties with respect to some repeat function $\mathbf{R}$ consistent with $\mathbf{T}$ as defined in Definition~\ref{def:closure_property} and Definition~\ref{def:repetition_consistency}.
    \end{enumerate}
\end{definition}

Note that here the bound for the $\mathbf{T}$-PRS generator is the same as the polynomial-time PRS, namely that we demand the generator must be a polynomial-size circuit regardless of $\mathbf{T}$ bound on the computational resource of the observer.
This can be thought of as a scenario where the generator belongs to a party with more computational resource than the observer, which is the focus of this work.
A more general scenario where the computational resource of the generator is also bounded by $\mathbf{T}$ for \textit{any} choice of $\mathbf{T}$ is left as an open question for future work.

\section{T-Pseudorandom Quantum State Constructions}\label{sec:TPRS_constructions}

In this section we give two different constructions of $\mathbf{T}$-PRS.
The first construction is inspired by the subset phase state construction proposed in~\cite{aaronson2022quantum}, whereas the second construction takes inspirations from the subset state proposed in~\cite{giurgica2023pseudorandomness}.
These constructions use quantum-secure pseudorandom phase functions (QPRPF) and quantum-secure pseudorandom permutations (QPRP) as primitives.
As their name indicate, these functions (permutations) are efficiently computable functions (permutations) that are indistinguihable from truly random functions (permutations) to efficient quantum algorithms.
Following what we have done so far in generalizing efficiency of quantum algorithm to $\mathbf{T}$-efficient, where its runtime is bounded by some function $s\in O(\mathbf{T})$, we first need the analogous notion of $\mathbf{T}$-QPRPF and $\mathbf{T}$-QPRP.

\begin{definition}[{Quantum-secure pseudorandom phase functions and quantum-secure pseudorandom permutations}]
    For keyspace $\mathcal{K}$ and $n\in\N$, a family of phase functions $F=\{f_k : \{0,1\}^n\rightarrow\{0,1\} \}_{k\in\mathcal{K}}$ is a \textit{$\mathbf{T}$ - quantum-secure pseudorandom phase function} ($\mathbf{T}$-QPRPF) if $f_k$ is computable in $O(\mathbf{T}(n))$ time and for all quantum algorithm $\mathcal{A}$ running in $O(\mathbf{T}(n))$ time, it holds that
    \begin{equation}
        \Big| \Pr_{k\leftarrow\mathcal{K}}[\mathcal{A}^{f_k}(1^n)=1] - \Pr_{\rand_f}[\mathcal{A}^{\rand_f}(1^n)=1] \Big| = \eta(n) \;.
    \end{equation}
    A family of permutations $=\{\sigma_k:\{0,1\}^n\rightarrow\{0,1\}^n\}_{k\in\mathcal{K}}$ is \textit{$\mathbf{T}$ - quantum-secure pseudorandom permutation} ($\mathbf{T}$-QPRP) if $\sigma_k$ is computable in $O(\mathbf{T}(n))$ time and for all quantum algorithm $\mathcal{A}$ running in $O(\mathbf{T}(n))$, it holds that
    \begin{equation}
        \Big| \Pr_{k\leftarrow\mathcal{K}}[\mathcal{A}^{\sigma_k}(1^n)=1] - \Pr_{\rand_\sigma}[\mathcal{A}^{\rand_\sigma}(1^n)=1] \Big| = \negl_\mathbf{T}(n) \;.
    \end{equation}
    Here, $\rand_f$ and $\rand_\sigma$ are uniformly-random phase function and uniformly-random permutation, respectively, and $\mathcal{A}^{\sigma_k},\mathcal{A}^{\rand_\sigma}$ denotes quantum algorithm $\mathcal{A}$ with oracle access to $\sigma_k,\sigma_k^{-1}$ and $\rand_\sigma,\rand_\sigma^{-1}$.
\end{definition}

By using $\mathbf{T}$-QPRPF and $\mathbf{T}$-QPRP we will now show the constructions of $\mathbf{T}$-pseudorandom subset phase states and $\mathbf{T}$-pseudorandom subset states.

\subsection{T-pseudorandom subset phase states}\label{sec:subset_phase_states}

\begin{definition}\label{def:subset_phase_state}
    For a subset of $n$-bit string $S\subseteq\{0,1\}^n$ and binary function $f:\{0,1\}^n\rightarrow\{0,1\}$, an $f,S$-\textit{subset phase state} is defined as
    \begin{equation}
        |\psi_{f,S}\> = \frac{1}{|S|} \sum_{x\in S} (-1)^{f(x)} |x\> \;.
    \end{equation}
    For permutation $\sigma:[n]\rightarrow[n]$ (where $[n]:=\{1,\dots,n\}$), an $f,\sigma$-subset phase state is defined as
    \begin{equation}
        |\psi_{f,\sigma}\> = \frac{1}{\sqrt{2^m}} \sum_{x\in\{0,1\}^m} (-1)^{f(p_\sigma(x0^{n-m}))} |p_\sigma(x0^{n-m})\> \;.
    \end{equation}
    where $p_\sigma:\{0,1\}^n\rightarrow\{0,1\}^n$ permutes the order of $n$-bit string $w$ as $p_\sigma(w) = w_{\sigma(1)}\dots w_{\sigma(n)}$.
\end{definition}

Now we will describe how one can construct a subset phase state that is a $\mathbf{T}$-PRS.
First we will describe the generator circuit of the $n$-qubit $f,S$-subset phase state.

\begin{lemma}[{\cite{aaronson2022quantum}}]\label{lem:subset_phase_state_generator}
    An $n$-qubit $f,S$-subset phase state with $|S|=2^m$ can be generated by a circuit with depth $O(\poly n)$.
\end{lemma}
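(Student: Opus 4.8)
The plan is to exhibit an explicit polynomial-size circuit that prepares $|\psi_{f,S}\>$ from the all-zeros state, assuming only black-box (oracle) access to the characteristic function of $S$ and to the phase function $f$, both of which are computable by poly-size circuits in the regimes we care about. The construction proceeds in three conceptual stages: (i) prepare a uniform superposition over the subset $S$; (ii) imprint the $(-1)^{f(x)}$ phases; (iii) uncompute any ancilla registers so the output register is left in the pure state $|\psi_{f,S}\>$ exactly (up to negligible error, if approximate sub-circuits are used).

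First I would handle stage (i). Since $|S| = 2^m$, the cleanest route is to assume $S$ is presented via an efficiently computable injection $\iota : \{0,1\}^m \to \{0,1\}^n$ whose image is $S$ — this is exactly the role played by $p_\sigma(\,\cdot\,0^{n-m})$ in the second form of Definition~\ref{def:subset_phase_state}, so in that case $\iota$ is just a qubit permutation and is trivially a depth-$O(1)$ (or $O(n)$) circuit. One applies Hadamards to an $m$-qubit register to get $2^{-m/2}\sum_{x\in\{0,1\}^m}|x\>$, then computes $\iota(x)$ into a fresh $n$-qubit register, obtaining $2^{-m/2}\sum_x |x\>|\iota(x)\>$. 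Because $\iota$ is injective, one can then run $\iota^{-1}$ (also efficiently computable in the permutation case, and available as part of the QPRP primitive in general) coherently to erase the first register, leaving $2^{-m/2}\sum_{x}|\iota(x)\> = \frac{1}{\sqrt{2^m}}\sum_{y\in S}|y\>$. This is the generic "prepare a superposition over the image of an efficiently invertible injection" gadget; it costs $O(m)$ Hadamards plus two evaluations of $\iota^{\pm 1}$, hence depth $O(\poly n)$.

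For stage (ii), I would apply the standard phase-kickback trick: compute $f(y)$ into an ancilla qubit, apply a $Z$ gate (or, more simply, compute $f(y)$ directly into the phase using a controlled phase flip), then uncompute $f(y)$. Since $f$ is computable by a circuit of size $O(\poly n)$ (QPRPF), this stage has depth $O(\poly n)$ and consumes/restores only clean ancillas. Composing the three stages gives the state $\frac{1}{\sqrt{2^m}}\sum_{x\in S}(-1)^{f(x)}|x\>$ in the output register with all ancillas returned to $|0\>$, and the total depth is a sum of $O(\poly n)$ terms, hence $O(\poly n)$; this is exactly the claim. (I would also remark that the $1/|S|$ normalization written in Definition~\ref{def:subset_phase_state} should read $1/\sqrt{|S|}=2^{-m/2}$ for the state to be normalized, and the construction indeed produces the correctly normalized state.)

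The main obstacle is stage (i) in the \emph{general} subset case, i.e., when $S$ is an arbitrary size-$2^m$ subset rather than the image of a simple permutation: one needs an efficiently computable bijection between $\{0,1\}^m$ and $S$, and for a generic $S$ no such map exists. The resolution — which is the reason the paper routes everything through pseudorandom \emph{permutations} — is that for the $\mathbf{T}$-PRS construction $S$ is never arbitrary: it is taken to be $\sigma_k(\{0,1\}^m 0^{n-m})$ or an analogous image under a QPRP $\sigma_k$, so $\iota = \sigma_k$ is efficiently computable and efficiently invertible by the very definition of a (pseudorandom) permutation. Thus for the states actually used downstream, the injection gadget goes through verbatim, and Lemma~\ref{lem:subset_phase_state_generator} follows by the composition argument above; I would state the lemma's proof for exactly this structured family (as in~\cite{aaronson2022quantum}) and note that the fully-general arbitrary-$S$ statement is not needed.
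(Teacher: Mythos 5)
Your construction --- Hadamards on an $m$-qubit register, coherent application of the (efficiently invertible) permutation to map $\{0,1\}^m 0^{n-m}$ onto $S$, then a phase oracle for $f$ with ancilla uncomputation --- is exactly the circuit the paper sketches (citing~\cite{aaronson2022quantum}): ``apply Hadamard gates on the first $m$ qubits, then apply the permutation $\sigma$, and then the phase oracle $U_f$.'' Your proposal is correct and takes essentially the same approach, just with more care about reversibility of the permutation step and the (correctly spotted) $1/\sqrt{|S|}$ normalization.
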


This is shown in~\cite{aaronson2022quantum} by a construction of a circuit that takes $n$-qubit input and apply hadamard gates on the first $m$ qubits, then apply the permutation $\sigma$, and then the phase oracle $U_f$.

It is shown in Theorem 2 of~\cite{aaronson2022quantum} that the trace distance between an $n$-qubit truly random subset phase state and an $n$-qubit Haar-random ensemble $\{|\varphi\>\}$ is bounded as:
\begin{equation}\label{eqn:stat_indistinguishable_random_subsetphase_haar}
    d_{\tr}\Big( \E_{\rand_f,\rand_S}\big[\ketbra{\psi_{\rand_f,\rand_S}}^{\otimes t}\big] \,,\, \E_\varphi\big[\ketbra{\varphi}^{\otimes t}\big] \Big) < O\Big(\frac{t^2}{2^m}\Big)
\end{equation}
for $t<2^m<2^n$ where $\rand_f$ is uniformly-random over all phase functions $\rand_f:\{0,1\}^n\rightarrow\{0,1\}$ and $\rand_S$ is uniformly-random over all subsets of size $|S|=2^m$ and $\mu$ is the $n$-qubit Haar measure.
Uniformly-random subset phase state can be equivalently obtained by uniformly-random permutation $\rand_\sigma$ and uniformly-random phase function $\rand_f$,
\begin{equation}
    \E_{\rand_f,\rand_\sigma} \big[\ketbra{\psi_{\rand_f,\rand_\sigma}}^{\otimes t}\big] \;.
\end{equation}
where
\begin{equation}
    |\psi_{\rand_f,\rand_\sigma}\> = \frac{1}{\sqrt{2^m}} \sum_{x\in\{0,1\}^m} (-1)^{\rand_f(\rand_\sigma(x0^{n-m}))} |\rand_\sigma(x0^{n-m})\> \,.
\end{equation}
Now we will show how to determine the size of subset $S\subseteq\{0,1\}^n$ for the $\mathbf{T}$-PRS subset phase state construction.

\begin{proposition}\label{prop:subset_phase_state_subset_size}
    Let $f:\N\rightarrow\N$ be a non-decreasing function that grows at most polynomially.
    It holds that:
    \begin{enumerate}
        \item For number of copies $t:=t(n)\in O(1)$ and size of subset $|S|=2^m:=2^{m(n)} \in \omega(f(n))$, the trace distance in eqn.~\eqref{eqn:stat_indistinguishable_random_subsetphase_haar} is $f(n)$-negligible.
        \item For number of copies $t:=t(n)\in O(\poly f(n))$ and size of subset $|S|=2^m:=2^{m(n)} \in \omega(\poly f(n))$, the trace distance in eqn.~\eqref{eqn:stat_indistinguishable_random_subsetphase_haar} is $\poly f(n)$-negligible.
    \end{enumerate}
\end{proposition}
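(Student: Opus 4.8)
\emph{Proof strategy.} The plan is to substitute the statistical estimate of eqn.~\eqref{eqn:stat_indistinguishable_random_subsetphase_haar}, which gives trace distance at most $C_0\, t^2/2^m$ for some absolute constant $C_0>0$ in the regime $t<2^m<2^n$, into the definition of $\mathbf{T}$-negligibility (Definition~\ref{def:negligible_function}). Once this substitution is made, the claim becomes a purely asymptotic comparison between $t(n)^2$, $2^{m(n)}$, and the functions $g\in\Theta(f(n))$ (resp.\ $g\in\Theta(\poly f(n))$) appearing in that definition; no further quantum input is required.

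First I would verify that the regime $t<2^m<2^n$ of eqn.~\eqref{eqn:stat_indistinguishable_random_subsetphase_haar} holds for all but finitely many $n$, so that the bound actually applies: we have $m(n)<n$ since $S$ is a proper subset of $\{0,1\}^n$, and in case~(i), $t\in O(1)$ while $2^{m}\in\omega(f(n))$ is unbounded, so $t<2^m$ eventually; in case~(ii), $t\in O(\poly f(n))$ is dominated by a fixed polynomial in $f(n)$ whereas $2^{m}\in\omega(\poly f(n))$ dominates every such polynomial, so again $t<2^m$ eventually. For case~(i), fix $g\in\Theta(f(n))$; there exist constants $C,C'>0$ with $g(n)\le C f(n)$ and $t(n)^2\le C'$ for all large $n$, hence $C_0\, t(n)^2\, g(n)\le (C_0 C' C)\, f(n)$ for all large $n$. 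Since $2^{m(n)}\in\omega(f(n))$ it eventually exceeds the constant multiple $(C_0 C' C)\, f(n)$, so $C_0\, t(n)^2/2^{m(n)}<1/g(n)$ for all but finitely many $n$; as $g\in\Theta(f(n))$ was arbitrary, this is exactly $f(n)$-negligibility. Case~(ii) runs identically with ``constant'' replaced by ``fixed polynomial in $f(n)$'': writing $t(n)^2\le p(f(n))$ for a polynomial $p$ and $g(n)\le f(n)^{c}$ for the exponent $c$ witnessing $g\in\Theta(\poly f(n))$, the product $C_0\, t(n)^2\, g(n)$ is bounded by a single fixed polynomial in $f(n)$, which $2^{m(n)}\in\omega(\poly f(n))$ eventually exceeds, giving $C_0\, t(n)^2/2^{m(n)}<1/g(n)$ and hence $\poly f(n)$-negligibility.

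I do not expect a genuine obstacle here; the only points needing care are (a) checking the precondition $t<2^m<2^n$ so the cited bound is legitimately applicable, and (b) unwinding the quantifier in Definition~\ref{def:negligible_function}, which requires beating $1/g(n)$ for \emph{every} $g\in\Theta(\mathbf{T})$ — this is matched precisely by the hypothesis that $2^{m(n)}$ grows faster than every constant multiple of $f(n)$ (case~(i)), resp.\ every polynomial in $f(n)$ (case~(ii)), i.e.\ by $2^m\in\omega(f(n))$ resp.\ $2^m\in\omega(\poly f(n))$. The assumption that $f$ grows at most polynomially plays no role in the implication itself; it only guarantees the hypotheses are non-vacuous, namely that an $m(n)<n$ with $2^{m(n)}$ in the required growth class exists while keeping $|S|\le 2^n$.
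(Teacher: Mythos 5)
Your proposal is correct and follows essentially the same route as the paper's proof: substitute the bound $O(t^2/2^m)$ from eqn.~\eqref{eqn:stat_indistinguishable_random_subsetphase_haar} into Definition~\ref{def:negligible_function} and compare growth rates, using $t^2$ bounded by a constant (resp.\ a polynomial in $f$) against $2^{m}\in\omega(f(n))$ (resp.\ $\omega(\poly f(n))$). Your explicit checks of the precondition $t<2^m<2^n$ and of the quantifier over all $g\in\Theta(\mathbf{T})$ are slightly more careful than the paper's write-up but do not change the argument.
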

\begin{proof}
    For $O(\mathbf{T})=O(f(n))$, set the number of copies as $t(n)\in O(1)$ and size of subset as $2^{m(n)} \in \omega(f(n))$.
    Hence there exists $c>0$ and $N$ such that $n\geq N\Rightarrow t(n)\leq c$ and for all $c'>0$ there exists $N$ such that $n\geq N\Rightarrow 2^{m(n)} > c'f(n)$ (or equivalently, $m(n) > \log(c'f(n))$).
    Hence it holds that for all $c'>0$ and for some $c>0$,
    \begin{equation}
        \frac{t(n)^2}{2^{m(n)}} < \frac{c^2}{c'f(n)}
    \end{equation}
    for all but finitely many $n$, which implies that $\frac{t(n)^2}{2^{m(n)}} \in o(f(n)^{-1})$.
    Thus for any $g(n)\in O(\frac{t^2}{2^m})$ with $t:=t(n)\in O(1)$ and $2^m:=^{m(n)}\in \omega(f(n))$ we have $g(n)\in o(f(n)^{-1})$, and therefore $g(n)$ is a $\negl_{O(f(n))}$ function.
    So by eqn.~\eqref{eqn:stat_indistinguishable_random_subsetphase_haar} the trace distance between $t(n)\in O(1)$ copies of $n$-qubit subset phase state ensemble with $|S|=\omega(f(n))$ and $t(n)\in O(1)$ copies of $n$-qubit Haar-random ensemble is $\mathbf{T}$-negligible.

    Now consider $O(\mathbf{T}) = O(\poly f(n))$ and $t(n)\in O(\poly f(n))$ and subset size $|S|=2^{m(n)}\in \omega(\poly f(n))$.
    Thus for all $c'>0$ and for some $c>0$ it holds that
    \begin{equation}
        \frac{t(n)^2}{2^{m(n)}} < \frac{f(n)^{2c}}{f(n)^{c'}}
    \end{equation}
    for all but sufficiently many $n$.
    Since $c'>0$ can be arbitrarily large therefore for all $g(n)\in O(\frac{t(n)^2}{2^{m(n)}})$ it holds that $g(n)\in o(\frac{1}{\poly f(n)})$, which implies that $g(n) \in \negl_{\poly f(n)}$.
    Therefore by eqn.~\eqref{eqn:stat_indistinguishable_random_subsetphase_haar} the trace distance between $t(n)\in O(\poly f(n))$ copies of $n$-qubit subset phase state ensemble with $|S|=\omega(\poly f(n))$ and $t(n)\in O(\poly f(n))$ copies of $n$-qubit Haar-random ensemble is $\mathbf{T}$-negligible.
\end{proof}

\begin{remark}\label{rem:number_of_copies}
    Note that the reason that we consider $t(n)\in O(1)$ and $t(n)\in O(\poly f(n))$ is to satisfy the $f(n)$-indistinguishability and $\poly f(n)$-indistinguishability, respectively.
    Particularly by the definition of $\mathbf{T}$-indistinguishability in Definition~\ref{def:T_indistinguishable}, we need an algorithm $\mathcal{A}$ with runtime $s(N)\in O(\mathbf{T}(N))$ given $N$-qubit input to run in $s(nt(n))$ time where $s(nt(n))\in O(\mathbf{T}(n))$ given $t(n)$ copies of an $n$-qubit state $|\tau\>$.
    For $\mathbf{T}(n)=f(n)$, $s(nt(n))\in O(f(n))$ is satisfied when $t(n)=O(1)$.
    On the other hand for $\mathbf{T}(n)=\poly f(n)$, $s(nt(n))\in O(\poly f(n))$ is satisfied when $t(n)=O(\poly f(n))$.
\end{remark}

Finally, by using Lemma~\ref{lem:subset_phase_state_generator} and Proposition~\ref{prop:subset_phase_state_subset_size} we obtain a subset phase state $\mathbf{T}$-PRS construction.

\begin{theorem}\label{thm:subset_phase_state_Tpseudorandomness}
    Let $f:\N\rightarrow\N$ be a non-decreasing function that grows at most polynomially.
    It holds that:
    \begin{enumerate}
        \item A subset phase state ensemble $\{|\psi_{f,\sigma}\>\}_{f,\sigma}$ with 
        subset size $|S|\in \omega(f(n))$ is a $f(n)$-PRS given number of copies $t:=t(n)\in O(1)$.
        
        \item A subset phase state ensemble $\{|\psi_{f,\sigma}\>\}_{f,\sigma}$ with 
        subset size $|S| \in \omega(\poly f(n))$ is a $\poly f(n)$-PRS given number of copies $t:=t(n)\in O(\poly f(n))$.
    \end{enumerate}
\end{theorem}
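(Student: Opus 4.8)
The plan is to verify the three requirements of Definition~\ref{def:T_pseudorandom_states} for the ensemble $\{|\psi_{f,\sigma}\>\}_{f,\sigma}$ obtained by instantiating the phase function $f$ with a quantum-secure pseudorandom phase function and the permutation $\sigma$ with a quantum-secure pseudorandom permutation. Two of the three requirements are quick. For the generator requirement (item~1), I would instantiate the circuit of Lemma~\ref{lem:subset_phase_state_generator} with these pseudorandom primitives: since $f$ grows at most polynomially they are evaluable in $O(\poly n)$ time, so composing their evaluation with the Hadamard layer, the permutation, and the phase oracle yields a uniform family $\{G_n\}_n$ of size $O(\poly n)$ that maps a key to $|\psi_{f,\sigma}\>$. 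For the closure requirement (item~3), I would invoke Proposition~\ref{prop:closure_negligible_function} in case~1 (with repeat functions $\mathbf{R}=\mathbf{K}$, the constant functions, which are consistent with $f(n)$) and its $\poly f(n)$ counterpart in case~2 (with $\mathbf{R}=O(\poly f(n))$).

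The substance is item~2, $\mathbf{T}$-indistinguishability from the Haar ensemble. Fix a distinguisher $\algo$ on $N$-qubit inputs with runtime $s(N)\in O(\mathbf{T}(N))$ and a copy count $t(n)$ satisfying $s(nt(n))\in O(\mathbf{T}(n))$; by Remark~\ref{rem:number_of_copies} this forces $t(n)\in O(1)$ in case~1 and $t(n)\in O(\poly f(n))$ in case~2. I would bound the distinguishing advantage of $\algo$ by the triangle inequality, routing through the \emph{truly random} subset phase state $\E_{\rand_f,\rand_\sigma}[\ketbra{\psi_{\rand_f,\rand_\sigma}}^{\otimes t(n)}]$, into (I) a ``pseudorandom vs.\ truly random'' term and (II) a ``truly random vs.\ Haar'' term. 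Term~(II) is at most the trace distance $O(t(n)^2/2^{m(n)})$ of eqn.~\eqref{eqn:stat_indistinguishable_random_subsetphase_haar}, and Proposition~\ref{prop:subset_phase_state_subset_size} states precisely that, with $|S|=2^{m(n)}\in\omega(f(n))$ and $t(n)\in O(1)$ (resp.\ $|S|\in\omega(\poly f(n))$ and $t(n)\in O(\poly f(n))$), this term is $\mathbf{T}$-negligible.

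For term~(I) I would run a two-step hybrid: first swap the pseudorandom phase function for a truly random one, then swap the pseudorandom permutation for a truly random one. Each swap is controlled by a reduction in which an oracle algorithm $\mathcal{B}$ uses its oracle --- the function in the first swap, the permutation in the second --- to prepare $t(n)$ copies of the relevant subset phase state via the generator of Lemma~\ref{lem:subset_phase_state_generator}, supplying the remaining ingredient by lazily sampling a truly random phase on the $2^{m(n)}$ points of the support when needed, then runs $\algo$ on those copies and echoes its output bit. Since $t(n)$ is at most polynomial, each copy costs $O(\poly n)$ gates, and $\algo$ runs in $O(\mathbf{T}(n))\subseteq O(\poly n)$, so $\mathcal{B}$ is an admissible adversary against the pseudorandom primitives; hence each hybrid gap, and therefore term~(I), is negligible and in particular $\mathbf{T}$-negligible, using the elementary inclusion $\negl_\poly\subseteq\negl_\mathbf{T}$ valid because $f$ is at most polynomial. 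Adding term~(I) to term~(II), both $\mathbf{T}$-negligible, the first closure property (from the previous step) gives a $\mathbf{T}$-negligible sum, which is exactly the bound demanded by Definition~\ref{def:T_indistinguishable}, finishing item~2.

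The main obstacle is term~(I), specifically the accounting in the reduction: one must certify that $\mathcal{B}$ --- state preparation together with the call to $\algo$ --- stays inside the computational budget against which the underlying pseudorandom function and permutation are secure, and that one chooses $m(n)$ only just large enough that $2^{m(n)}\in\omega(f(n))$ (resp.\ $\omega(\poly f(n))$) so that preparing each copy and lazily sampling a random phase both remain polynomial. This is exactly what forces the cap on the number of copies isolated in Remark~\ref{rem:number_of_copies}. One then has to check that the negligibility produced by the reduction is inherited as $\mathbf{T}$-negligibility and survives being combined with term~(II) through the closure property.
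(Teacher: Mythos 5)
Your proposal is correct and follows essentially the same route as the paper's proof: a hybrid argument through the truly random subset phase state, with the computational step handled by the security of the pseudorandom phase function and permutation and the statistical step handled by Proposition~\ref{prop:subset_phase_state_subset_size}, combined via the triangle inequality and the closure property of $\negl_\mathbf{T}$. The only (harmless) deviations are that you split the computational hybrid into two sub-swaps with an explicit reduction, and that you instantiate with $\poly$-secure primitives and invoke $\negl_\poly\subseteq\negl_{\mathbf{T}}$, whereas the paper uses $\mathbf{T}$-secure primitives directly.
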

\begin{proof}
    These subset phase states can be generated in $O(n)$ by Lemma~\ref{lem:subset_phase_state_generator}, so we only need to show that it is $\mathbf{T}$-indistinguishable to Haar-random state ensembles for negligible functions $\negl_\mathbf{T}$ satisfying the closure properties with respect to repeat functions $\mathbf{R}$ that is consistent with $\mathbf{T}\in\{f(n),\poly f(n)\}$.

    First note that for $\mathbf{T}(n) = f(n)$, an algorithm $\mathcal{A}$ with runtime $s(N)\in O(f(N))$ given $N$ qubit input has a runtime of $s(nt)\in O(f(n))$ given $t\in O(1)$ copies of $n$-qubit states.
    Then to show $f(n)$-indistinguishability we use a hybrid argument with: 
    \begin{enumerate}
        \item Hybrid 0: $t$ copies of size $|S|\in\omega(f(n))$ subset phase state ensemble $\{|\psi_{f,\sigma}\>\}_{f,\sigma}$ with $f(n)$-QPRPF $f$ and $f(n)$-QPRP $\sigma$ as an input to $\mathcal{A}$.
        \item Hybrid 1: $t$ copies of size $|S|\in\omega(f(n))$ subset phase state ensemble $\{|\psi_{\rand_\sigma,\rand_f}\>\}_{\rand_\sigma,\rand_f}$ for uniformly random permutation and phase function $\rand_\sigma,\rand_f$, respectively, as an input to $\mathcal{A}$.
        \item Hybrid 2: $t$ copies of Haar random ensemble $\{|\varphi\>\}$ as an input to $\mathcal{A}$.
    \end{enumerate}
    Clearly, for $t\in O(1)$ algorithm $\mathcal{A}$ outputs $\mathcal{A}(|\tau\>^{\otimes t})$ in $s(nt)\in O(f(n))$ since the input size is a just constant multiple of $n$.
    Now we show that
    \begin{equation}
        \Big|\Pr_{f,\sigma}[\algo(|\psi_{f,\sigma}\>^{\otimes t(n)})=1] - \Pr_\varphi[\algo(|\varphi\>^{\otimes t(n)})=1]\Big| < \eta(n) 
    \end{equation}
    for $\eta(n)\in\negl_{f(n)}$, namely that the Hybrid 0 and Hybrid 3 are $f(n)$-indistinguishable.
    We will use negligible function $\negl_{f(n)}$ with respect to repeat function $\mathbf{R}=O(1)$.
    Note that $\mathbf{R}=O(1)$ is consistent with $O(f(n))$ since for any $s(n)\in O(f(n))$ and any $r(n)\in O(1))$ it holds that $r(n)s(n) \in O(f(n))$.
    
    Now note that hybrid 0 and hybrid 1 are $O(f(n))$-indistinguishable since random permutation $\rand_\sigma$ is indistinguishable from $O(f(n))$-PRP $\sigma$ to all algorithms running in $O(f(n))$ and random function $\rand_f$ is indistinguishable from $O(f(n))$-PRP $\sigma$ to all algorithms running in $O(f(n))$, i.e.
    \begin{equation}\label{eqn:subset_phase_state_hybrid01_indistinguishability}
        \Big|\Pr_{f,\sigma}[\algo(|\psi_{f,\sigma}\>^{\otimes t})=1] - \Pr_{\rand_f,\rand_\sigma}[\algo(|\psi_{\rand_f,\rand_\sigma}\>^{\otimes t})=1]\Big| < \eta(n) 
    \end{equation}
    for some $\eta_0\in\negl_{f(n)}$.    
    Combining eqn.~\eqref{eqn:subset_phase_state_hybrid01_indistinguishability} above with part 1 of Proposition~\ref{prop:subset_phase_state_subset_size} that hybrid 1 and hybrid 2 are $f(n)$-indistinguishable:
    \begin{equation}\label{eqn:subset_phase_state_hybrid12_indistinguishability}
        \Big|\Pr_\varphi[\algo(|\varphi\>^{\otimes t})=1] - \Pr_{\rand_f,\rand_\sigma}[\algo(|\psi_{\rand_f,\rand_\sigma}\>^{\otimes t})=1]\Big| < \eta_1(n) 
    \end{equation}
    for some $\eta_1\in\negl_{f(n)}$, then by triangle inequality we have 
    \begin{equation}
    \begin{aligned}
        &\Big|\Pr_{f,\sigma}[\algo(|\psi_{f,\sigma}\>^{\otimes t(n)})=1] - \Pr_\varphi[\algo(|\varphi\>^{\otimes t(n)})=1]\Big| \\
        &\leq \Big|\Pr_{f,\sigma}[\algo(|\psi_{f,\sigma}\>^{\otimes t})=1] - \Pr_{\rand_f,\rand_\sigma}[\algo(|\psi_{\rand_f,\rand_\sigma}\>^{\otimes t})=1]\Big| \\
        &\quad + \Big|\Pr_\varphi[\algo(|\varphi\>^{\otimes t})=1] - \Pr_{\rand_f,\rand_\sigma}[\algo(|\psi_{\rand_f,\rand_\sigma}\>^{\otimes t})=1]\Big| \\
        &< \eta_0(n)+\eta_1(n)
    \end{aligned}
    \end{equation}
    since $\eta_0,\eta_1\in\negl_{f(n)}$, by the first closure property (Definition~\ref{def:closure_property}) of $\negl_{f(n)}$ it holds that $\eta_0(n)+\eta_1(n)\in\negl_{f(n)}$.

    For $\mathbf{T}(n)=\poly f(n)$, the proof is identical to the $\mathbf{T}(n)=f(n)$ case above.
    First, an algorithm $\mathcal{A}$ with runtime $s(N)\in O(\poly f(N))$ given $N$ qubit input has a runtime of $s(nt(n))\in O(\poly f(n))$ given $t(n)\in O(\poly f(n))$ copies of $n$-qubit states, since $s(nt(n)) = \poly(f(n\poly f(n)))$ is a polynomial since $f$ does not grow faster than polynomials.
    Then to show $\poly f(n)$-indistinguishability we use the same hybrid argument as above, but with $\poly f(n)$-QPRPF $f$ and $\poly f(n)$-QPRP $\sigma$, and number of copies $t(n)\in O(\poly f(n))$.
    Here we use negligible functions $\negl_{\poly f(n)}$ and repetition function $\mathbf{R}=O(\poly f(n))$ which is consistent with $\poly f(n)$ since for any $r(n),s(n)\in O(\poly f(n))$ it holds that $r(n)s(n)\in O(\poly f(n))$ again because $f$ does not grow faster than polynomials.
    
    As the case for $\mathbf{T}(n)=f(n)$ above, we can show that hybrid 0 with subset phase state input $|\psi_{f,\sigma}\>$ and hybrid 1 with $|\psi_{\rand_f,\rand_\sigma}\>$ input (both with $t(n)\in O(\poly f(n))$ copies thereof) are $\poly f(n)$-indistinguishable since we are using $\poly f(n)$-QPRPF $f$ and $\poly f(n)$-QPRP $\sigma$.
    Hybrid 1 and Hybrid 2 are also $\poly f(n)$-indistinguishable by Proposition~\ref{prop:subset_phase_state_subset_size}.
    Thus we can show that subset phase state ensemble $\{|\psi_{f,\sigma}\>\}_{f,\sigma}$ and Haar-random ensemble $\{|\varphi\>\}$ are $\poly f(n)$-indistinguishable by using triangle inequality and the closure property of $\negl_{\poly f(n)}$.
\end{proof}

\subsection{T-pseudorandom subset states}\label{sec:subset_states}

In this section we will give the subset state $\mathbf{T}$-PRS construction.

\begin{definition}
    An $n$-qubit \textit{subset state} $|S\>$ for $S\subseteq\{0,1\}^n$ is given by
    \begin{equation}
        |S\> = \frac{1}{\sqrt{|S|}} \sum_{x\in S} |x\> \;.
    \end{equation}
\end{definition}

Note that a subset state is similar to the subset phase state construction in Section~\ref{sec:subset_phase_states} in that we take the uniform superposition of $n$-bit strings in a subset $S\subseteq\{0,1\}^n$.
However, all of the individual terms here are phaseless.
This implies that the generator for an $n$-qubit subset state can be also constructed by the $O(\poly n)$ generator circuit of subset phase state in Lemma~\ref{lem:subset_phase_state_generator} using $\mathbf{T}$-PRP, but without the phase oracle $U^f$.
We denote this construction of subset state as $\{|S_\sigma\>\}_\sigma$ for $\mathbf{T}$-PRP $\sigma$.

Similar to the subset phase state, the lemma below gives an upper bound to the trace distance between an $n$-qubit random subset state ensemble $\{|S\>\}_S$ over all subsets of size $|S|=m$ and the $n$-qubit Haar-random ensemble $\{|\varphi\>\}_\varphi$.

\begin{lemma}[{\cite[Theorem 1]{giurgica2023pseudorandomness}}]\label{lem:trace_distance_random_subset_state}
    For subset $S\subseteq\{0,1\}^n$ with $|S|=m$ and for some positive integers $n,t$ it holds that
    \begin{equation}\label{eqn:trace_distance_random_subset_state}
        d_\mathrm{Tr}\bigg( \E_S[\ketbra{S}^{\otimes t}] \,,\, \E_\varphi[\ketbra{\varphi}^{\otimes t}] \bigg) \leq O\Big(\frac{tm}{2^n}\Big) + O\Big(\frac{t^2}{m}\Big)
    \end{equation}
    where subset $S$ is uniformly sampled from all possible $\binom{2^n}{m}$ size-$m$ subsets of $\{0,1\}^n$ and $\varphi$ is Haar-random.
\end{lemma}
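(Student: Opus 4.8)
\emph{Proof sketch.} The plan is to compute both $t$-copy averages explicitly and compare them on a ``generic'' part of the symmetric subspace, with the two error terms in~\eqref{eqn:trace_distance_random_subset_state} tracking two distinct sources of discrepancy. First I would use the standard fact that the Haar average is the maximally mixed state on the symmetric subspace, $\E_\varphi[\ketbra{\varphi}^{\otimes t}] = \Pi_{\mathrm{sym}}/\binom{2^n+t-1}{t}$, whose computational-basis matrix elements $\langle\vec y|\cdot|\vec x\rangle$ are nonzero only when the tuples $\vec x=(x_1,\dots,x_t)$ and $\vec y=(y_1,\dots,y_t)$ have the same multiset of entries, and then depend only on that multiset. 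Next I would compute $\E_S[\ketbra{S}^{\otimes t}]$ directly: its matrix element $\langle\vec y|\,\E_S[\ketbra{S}^{\otimes t}]\,|\vec x\rangle$ equals $m^{-t}\Pr_S[\{x_1,\dots,x_t,y_1,\dots,y_t\}\subseteq S]$, which depends only on the number $k$ of distinct strings in the combined $2t$-tuple and equals $m^{-t}\binom{2^n-k}{m-k}/\binom{2^n}{m}=m^{-t}\prod_{i=0}^{k-1}\frac{m-i}{2^n-i}$.

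Comparing the two operators entry by entry on the ``diagonal blocks'' (pairs $\vec x,\vec y$ with equal multisets and all $t$ entries distinct), the ratio of the coefficients is $1+O(t^2/m)+O(t^2/2^n)$, so these blocks already carry a $1-O(t^2/m)-O(t^2/2^n)$ fraction of each state and differ in trace norm by $O(t^2/m)$. The two remaining contributions are: (i) tuples with a repeated entry, i.e.\ a within-copy collision among the $t$ draws from $S$, whose total weight in $\E_S[\ketbra{S}^{\otimes t}]$ is $O(t^2/m)$ by a birthday bound, mirroring the $O(t^2/2^n)$ collision weight of the Haar state; and (ii) the off-diagonal blocks where $\vec x$ and $\vec y$ have different multisets, which vanish for the Haar state but are nonzero (though individually tiny, of order $(m/2^n)^{\#\text{extra distinct elements}}$) for the subset state. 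Contribution (ii) is the origin of the $O(tm/2^n)$ term: already at $t=1$ one computes $\E_S[\ketbra{S}]=\frac{1}{2^n}I+\frac{m-1}{2^n-1}\big(\ketbra{u}-\frac{1}{2^n}I\big)$ with $|u\rangle=2^{-n/2}\sum_x|x\rangle$, so $\big\|\E_S[\ketbra{S}]-I/2^n\big\|_1=O(m/2^n)$, exhibiting an ``excess coherence towards $|u\rangle$'' that scales like $m/2^n$ per copy.

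The main obstacle will be controlling (ii). A naive triangle inequality over the off-diagonal multiset blocks is hopeless: although each entry is exponentially small, there are exponentially many of them, and the block-wise bound overshoots the truth by roughly a factor of $2^n$. Instead I would estimate $\big\|\E_S[\ketbra{S}^{\otimes t}]-\E_\varphi[\ketbra{\varphi}^{\otimes t}]\big\|_1$ globally, either through the variational identity $\|X\|_1=\sup_{-I\preceq M\preceq I}\mathrm{Tr}[MX]$ applied to the explicit operator difference, or by exhibiting a coupling between the classical sampling ``pick a size-$m$ subset $S$, then $t$ i.i.d.\ elements of $S$'' and ``pick $t$ i.i.d.\ uniform elements of $\{0,1\}^n$,'' whose total-variation distance is $O(tm/2^n)+O(t^2/m)$, and lifting it to the states (the subset states being superpositions with uniform amplitudes and no phases). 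Combining (i) and (ii) by the triangle inequality then gives the stated bound. Alternatively, one simply invokes Theorem~1 of~\cite{giurgica2023pseudorandomness} verbatim, which is the route taken here.
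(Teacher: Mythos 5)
The paper offers no proof of this lemma at all---it is imported verbatim as Theorem~1 of~\cite{giurgica2023pseudorandomness}---so your final sentence (``invoke the cited theorem'') is exactly the route the paper takes, and your citation-based version is correct. Your accompanying sketch is a reasonable outline of how the cited result is actually established (symmetric-subspace form of the Haar moment, exact hypergeometric matrix elements of $\E_S[\ketbra{S}^{\otimes t}]$, birthday-type collision terms for $O(t^2/m)$, and the excess overlap with the uniform-superposition direction for $O(tm/2^n)$), but be aware that your step~(ii) is only a plan, not a proof: the proposed ``coupling of classical samplings lifted to states'' does not directly control the quantum trace distance, since the off-diagonal coherences are precisely what distinguish the subset state from its classical shadow, so if you ever needed a self-contained proof you would have to execute the global operator-norm estimate rather than the coupling shortcut.
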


Now we show how to determine the size of subset $S\subseteq\{0,1\}^n$ to construct a $\mathbf{T}$-PRS.

\begin{proposition}\label{prop:subset_state_subset_size}
    Let $f:\N\rightarrow\N$ be a non-decreasing function that grows at most polynomially.
    It holds that:
    \begin{enumerate}
        \item For number of copies $t:=t(n)\in O(1)$ and subset size $|S|=m(n)$ satisfying $\omega(f(n))< m(n)< o(2^n)$, the trace distance in eqn.~\eqref{eqn:trace_distance_random_subset_state} is $f(n)$-negligible.
        
        \item For number of copies $t:=t(n)\in O(\poly f(n))$ and subset size $|S|=m(n)$ satisfying $\omega(\poly f(n)) < m(n) < o(2^n)$, the trace distance in eqn.~\eqref{eqn:trace_distance_random_subset_state} is $\poly f(n)$-negligible.
    \end{enumerate}
\end{proposition}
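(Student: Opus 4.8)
The plan is to follow the proof of Proposition~\ref{prop:subset_phase_state_subset_size} almost verbatim, the only new feature being that the bound of Lemma~\ref{lem:trace_distance_random_subset_state} in eqn.~\eqref{eqn:trace_distance_random_subset_state} is a sum of two error terms, $O(tm/2^n)$ and $O(t^2/m)$, rather than the single term $O(t^2/2^m)$ appearing for subset phase states. Since $\negl_{f(n)}$ and $\negl_{\poly f(n)}$ are both closed under addition (Proposition~\ref{prop:closure_negligible_function} and the proposition immediately following it), it is enough to show that, under the stated choices of $t(n)$ and $m(n)$, each of the two terms is separately $\mathbf{T}$-negligible; their sum, and hence the trace distance it upper-bounds, is then $\mathbf{T}$-negligible.

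First I would dispatch the term $O(t^2/m)$, which is formally the term $O(t^2/2^m)$ of Proposition~\ref{prop:subset_phase_state_subset_size} with the subset size $m(n)$ playing the role of $2^{m(n)}$ there. In part~1, $t(n)\in O(1)$ and $m(n)\in\omega(f(n))$ give $t(n)^2/m(n)<c^2/(c'f(n))$ for every $c'>0$ and some constant $c$, for all but finitely many $n$, so $t^2/m\in o(1/f(n))\subseteq\negl_{f(n)}$. In part~2, $t(n)\in O(\poly f(n))$ and $m(n)\in\omega(\poly f(n))$ give $t(n)^2/m(n)<f(n)^{2c}/f(n)^{c'}$ for an arbitrarily large $c'$, for all but finitely many $n$; since $c'$ may be taken arbitrarily large, this lies in $o(1/\poly f(n))\subseteq\negl_{\poly f(n)}$.

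The term $O(tm/2^n)$ is the genuinely new ingredient. Fix any $g\in\Theta(\mathbf{T})$; because $f$ grows at most polynomially, $g$ and $t$ are each bounded by fixed polynomials in $n$, so $t(n)g(n)\le p(n)$ for some polynomial $p$. What is required is $t(n)m(n)g(n)<2^n$ for all but finitely many $n$, i.e. $m(n)<2^n/p(n)$ eventually. I would therefore use the hypothesis $m(n)<o(2^n)$ in the regime of interest, where $m(n)$ is at most quasi-polynomial (or, more generally, $2^{o(n)}$), so that $2^n/m(n)$ dominates every polynomial in $n$; this yields $t(n)m(n)/2^n<1/g(n)$ eventually for every $g\in\Theta(\mathbf{T})$, hence $tm/2^n\in\negl_{\mathbf{T}}$ for $\mathbf{T}=f(n)$ and for $\mathbf{T}=\poly f(n)$ alike. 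Adding the two negligible bounds and invoking closure under addition completes both parts.

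The main obstacle --- and the only place the argument departs in substance from Proposition~\ref{prop:subset_phase_state_subset_size} --- is precisely this control of $O(tm/2^n)$: one must ensure that the at-most-polynomial factors $t(n)$ and the arbitrary test function $g(n)\in\Theta(\mathbf{T})$ cannot overwhelm the exponential denominator $2^n$, which is exactly what pins down how small $m(n)$ must be chosen relative to $2^n$. Once that bookkeeping is settled, everything else is the routine estimate already carried out for the subset phase state.
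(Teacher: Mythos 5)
Your proof follows the paper's own argument essentially verbatim: the same decomposition into the two error terms of Lemma~\ref{lem:trace_distance_random_subset_state}, the same estimate for the $O(t^2/m)$ term (identical to the subset-phase-state case with $m$ in place of $2^m$), and closure of $\negl_{\mathbf{T}}$ under addition to combine them. The one place you depart is the term $O(tm/2^n)$, and there you are in fact more careful than the paper. The paper's proof asserts that $m(n)\in o(2^n)$ together with $t(n)\in O(1)$ (resp.\ $t(n)\in O(\poly f(n))$) implies $t(n)m(n)/2^n\in O(2^{-n})$; this is a non sequitur (take $m(n)=2^n/n$), and $m\in o(2^n)$ alone does not even guarantee that $tm/2^n$ is $o(1/f(n))$. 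You correctly identify that an additional growth restriction on $m$ is needed and supply one ($m(n)\le 2^{o(n)}$, or quasi-polynomial), which is sufficient; the sharp condition is the weaker requirement $t(n)m(n)f(n)\in o(2^n)$ for part~1 and $t(n)m(n)f(n)^c\in o(2^n)$ for every $c>0$ for part~2. So your write-up exposes a genuine (if minor) slack in the stated hypothesis $\omega(\poly f(n))<m(n)<o(2^n)$ and in the paper's own proof, rather than introducing a gap of its own.
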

\begin{proof}
    First, set $t:=t(n)=O(1)$ and $m:=m(n)$ such that $\omega(f(n)) < m(n) < o(2^n)$.
    We will evaluate the first term $O(tm/2^n)$ of the upper bound in Lemma~\ref{lem:trace_distance_random_subset_state}.
    Note that for all $c>0$ there exists $N$ such that $n\geq N\Rightarrow m(n)<c2^n$.
    Therefore for any $g(n)\in O(t(n)m(n)/2^n)$, it must hold that $g(n)\in O(2^{-n})$.
    
    Secondly, for the second term of the $O(t^2/m)$ upper bound in Lemma~\ref{lem:trace_distance_random_subset_state}, for all $c>0$ there exists $N\in\N$ such that $n\geq N \Rightarrow 1/m(n) < cf(n)$ since $m(n)\in\omega(f(n))$. 
    Thus for all $c>0$ and some constant $c'$ it holds that
    \begin{equation}
        \frac{t(n)^2}{m(n)} < \frac{t}{cf(n)}
    \end{equation}
    for all but finitely many $n$.
    Therefore for all $h(n)\in O(t^2/m)$, it holds that $h(n) \in o(f(n)^{-1})$.

    Putting both terms together, by Lemma~\ref{lem:trace_distance_random_subset_state} we obtain
    \begin{equation}
        O\Big(\frac{tm}{2^n}\Big) + O\Big(\frac{t^2}{m}\Big) < O\Big(\frac{1}{2^n}\Big) + o\Big(\frac{1}{f(n)}\Big) 
    \end{equation}
    which implies that 
    \begin{equation}
        d_\mathrm{Tr}\bigg( \E_S[\ketbra{S}^{\otimes t}] \,,\, \E_\varphi[\ketbra{\varphi}^{\otimes t}] \bigg) \in \negl_{O(f(n))} \;,
    \end{equation}
    hence proving the claim.

    Now for $t:=t(n)\in O(\poly f(n))$ and $\omega(\poly f(n)) < m(n) < o(2^n)$, we first look at the $O(tm/2^n)$ term.
    Note that for all $c>0$ and for some $c'$ it holds that $tm = t(n)m(n) < f(n)^{c'} 2^{cn}$ for all but finitely many $n$.
    Thus for all $g(n)\in O(tm/2^n)$ it holds that $g(n)\in O(2^{-n})$.
    Now for the $O(t^2/m)$ term, note that for some $c'>0$ and for all $c>0$ it holds that 
    \begin{equation}
        \frac{t(n)^2}{m(n)} < \frac{f(n)^{c'}}{f(n)^c}
    \end{equation}
    for all but finitely may $n$.
    Since $c>0$ can be arbitrarily large, therefore it holds that for any $g(n)\in O(t^2/m)$ we have $g(n)\in o(\frac{1}{\poly f(n)})$.
    Finally putting both terms together we have
    \begin{equation}
        O\Big(\frac{tm}{2^n}\Big) + O\Big(\frac{t^2}{m}\Big) < O(2^{-n}) + o\Big(\frac{1}{\poly f(n)}\Big) \;.
    \end{equation}
    Since the right hand side of the inequality is a $\poly f(n)$-negligible function $\negl_{\poly f(n)}$, thus the trace distance in eqn.~\eqref{eqn:trace_distance_random_subset_state} is $\poly f(n)$-negligible.
\end{proof}

Note that here we use similar $t(n)$ as in Proposition~\ref{prop:subset_phase_state_subset_size} for subset phase state for the same reasoning (see Remark~\ref{rem:number_of_copies}).

Finally, by using Proposition~\ref{prop:subset_state_subset_size} and the same $O(n)$ construction as the subset phase state (without the phase oracle) we obtain a subset state $\mathbf{T}$-PRS construction.

\begin{theorem}\label{thm:subset_state_Tpseudorandomness}
    Let $f:\N\rightarrow\N$ be a non-decreasing function that grows at most polynomially.
    It holds that:
    \begin{enumerate}
        \item A subset state ensemble $\{|\psi_\sigma\>\}_\sigma$ with 
        subset size $|S|=m(n)$ such that $\omega(f(n))< m(n)< o(2^n)$ is a $f(n)$-PRS given number of copies $t:=t(n)\in O(1)$.
        
        \item A subset state ensemble $\{|\psi_\sigma\>\}_\sigma$ with 
        subset size $|S|=m(n)$ such that $\omega(\poly f(n)) < m(n) < o(2^n)$ is a $\poly f(n)$-PRS given number of copies $t:=t(n)\in O(\poly f(n))$.
    \end{enumerate}
\end{theorem}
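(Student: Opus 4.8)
The plan is to mirror the proof of Theorem~\ref{thm:subset_phase_state_Tpseudorandomness}, replacing the subset phase state by the phaseless subset state and dropping the $\mathbf{T}$-QPRPF primitive, keeping only the $\mathbf{T}$-QPRP $\sigma$. First I would dispatch condition~1 of Definition~\ref{def:T_pseudorandom_states}: by the remark preceding the theorem, the circuit of Lemma~\ref{lem:subset_phase_state_generator} with the phase oracle $U^f$ removed prepares $|S_\sigma\>$ in depth $O(\poly n)$ from a key describing $\sigma$, and the keyspace is that of the $\mathbf{T}$-QPRP, of size $O(\mathbf{T}(n))$. It then remains to establish $\mathbf{T}$-indistinguishability from the Haar ensemble and to record the repeat functions $\mathbf{R}$: as in the subset phase state proof we take $\mathbf{R}=O(1)$ when $\mathbf{T}=f(n)$ and $\mathbf{R}=O(\poly f(n))$ when $\mathbf{T}=\poly f(n)$, which are consistent with $\mathbf{T}$ by the same elementary closure-under-multiplication argument and for which $\negl_{f(n)}$ resp. $\negl_{\poly f(n)}$ satisfy the closure properties (Definition~\ref{def:closure_property}).

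For $\mathbf{T}=f(n)$, with $t(n)\in O(1)$, an algorithm $\algo$ with runtime $s(N)\in O(f(N))$ on $N$-qubit input runs in $s(nt(n))\in O(f(n))$ on the relevant inputs since $nt(n)$ is a constant multiple of $n$ (cf. Remark~\ref{rem:number_of_copies}). I would then run the two-step hybrid argument with Hybrid~0 being $t$ copies of $\{|S_\sigma\>\}_\sigma$ for $f(n)$-QPRP $\sigma$, Hybrid~1 being $t$ copies of $\{|S\>\}_S$ with $S$ uniform over size-$m(n)$ subsets (equivalently a uniformly random permutation $\rand_\sigma$), and Hybrid~2 being $t$ copies of the Haar ensemble $\{|\varphi\>\}$. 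Hybrid~0 and Hybrid~1 are $f(n)$-indistinguishable because any $O(f(n))$-time distinguisher between them yields an $O(f(n))$-time distinguisher against the $f(n)$-QPRP (the subset state is built by applying $p_\sigma$ inside the $O(\poly n)$ generator, so oracle access to $\sigma,\sigma^{-1}$ suffices), and Hybrid~1 and Hybrid~2 are $f(n)$-indistinguishable by part~1 of Proposition~\ref{prop:subset_state_subset_size} applied with $\omega(f(n))<m(n)<o(2^n)$. The triangle inequality bounds the Hybrid~0--Hybrid~2 advantage by $\eta_0(n)+\eta_1(n)$ with $\eta_0,\eta_1\in\negl_{f(n)}$, and the first closure property gives $\eta_0+\eta_1\in\negl_{f(n)}$, so condition~2 holds. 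For $\mathbf{T}=\poly f(n)$ the argument is verbatim after replacing $f(n)$-QPRP by $\poly f(n)$-QPRP, taking $t(n)\in O(\poly f(n))$ and $\omega(\poly f(n))<m(n)<o(2^n)$, invoking part~2 of Proposition~\ref{prop:subset_state_subset_size}, and noting the runtime check $s(nt(n))\in O(\poly f(n))$ still goes through because $f$ grows at most polynomially.

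I do not expect a genuine obstacle, since the structure is the same as Theorem~\ref{thm:subset_phase_state_Tpseudorandomness}. The one point needing a little care is the reduction in the Hybrid~0 $\approx$ Hybrid~1 step: one must check that augmenting the distinguisher with the $O(\poly n)$ state-preparation circuit and its oracle calls to $\sigma,\sigma^{-1}$ keeps the total runtime within the $O(\mathbf{T})$ budget and keeps the number of queries within what the $\mathbf{T}$-QPRP security statement allows. This is handled exactly as in the subset phase state case, where the additive $O(\poly n)\subseteq O(\mathbf{T})$ overhead is absorbed because $t(n)$ is chosen (per Remark~\ref{rem:number_of_copies}) precisely so that $s(nt(n))\in O(\mathbf{T}(n))$.
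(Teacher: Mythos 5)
Your proposal follows essentially the same route as the paper's proof: the same three-hybrid argument (QPRP subset state $\to$ uniformly random subset state $\to$ Haar) with Hybrid~1--Hybrid~2 handled by Proposition~\ref{prop:subset_state_subset_size}, the same choices of $t(n)$ and repeat functions $\mathbf{R}$, and the same closure/triangle-inequality conclusion. No gaps; if anything, you are slightly more careful than the paper in explicitly accounting for the generator circuit's overhead in the reduction and in citing the subset-state (rather than subset-phase-state) proposition for the Hybrid~1--Hybrid~2 step.
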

\begin{proof}
    Since $\{|\psi_\sigma\>\}_\sigma$ can be generated by an $O(n)$ circuit similar to the subset phase state construction, we only need to show its $\mathbf{T}$-indistinguishability from Haar-random ensemble $\{|\varphi\>\}$ for negligible functions $\negl_\mathbf{T}$ satisfying the closure properties with respect to repeat functions $\mathbf{R}$ that is consistent with $\mathbf{T}\in\{f(n),\poly f(n)\}$.

    The $\mathbf{T}$-indistinguishability proof for subset states is similar to that of subset phase state in Theorem~\ref{thm:subset_phase_state_Tpseudorandomness}.
    First for $\mathbf{T}(n) = f(n)$, an algorithm $\mathcal{A}$ with runtime $s(N)\in O(f(N))$ given $N$ qubit input has a runtime of $s(nt)\in O(f(n))$ given $t\in O(1)$ copies of $n$-qubit states.
    We use a hybrid argument with: 
    \begin{enumerate}
        \item Hybrid 0: $t$ copies of size $|S|=m(n)$ such that $\omega(f(n))< m(n)< o(2^n)$ subset state ensemble $\{|\psi_\sigma\>\}_\sigma$ with $f(n)$-QPRP $\sigma$ as an input to $\mathcal{A}$.
        \item Hybrid 1: $t$ copies of size $|S|=m(n)$ such that $\omega(f(n))< m(n)< o(2^n)$ subset phase state ensemble $\{|\psi_{\rand_\sigma}\>\}_{\rand_\sigma}$ for uniformly random permutation $\rand_\sigma$ as an input to $\mathcal{A}$.
        \item Hybrid 2: $t$ copies of Haar random ensemble $\{|\varphi\>\}$ as an input to $\mathcal{A}$.
    \end{enumerate}
    Clearly, for $t\in O(1)$ algorithm $\mathcal{A}$ outputs $\mathcal{A}(|\tau\>^{\otimes t})$ in $s(nt)\in O(f(n))$ since the input size is a just constant multiple of $n$.
    We use negligible function $\negl_{f(n)}$ with respect to repeat function $\mathbf{R}=O(1)$, which is consistent with $O(f(n))$ since for any $s(n)\in O(f(n))$ and any $r(n)\in O(1))$ it holds that $r(n)s(n) \in O(f(n))$.
    
    Now note that hybrid 0 and hybrid 1 are $O(f(n))$-indistinguishable since random permutation $\rand_\sigma$ is indistinguishable from $O(f(n))$-PRP $\sigma$ to all algorithms running in $O(f(n))$ and random function $\rand_f$ is $f(n)$-indistinguishable from $O(f(n))$-PRP $\sigma$ to all algorithms running in $O(f(n))$.   
    Along with part 1 of Proposition~\ref{prop:subset_phase_state_subset_size} that hybrid 1 and hybrid 2 are $f(n)$-indistinguishable, then by triangle inequality we have that
    \begin{equation}
    \begin{aligned}
        &\Big|\Pr_\sigma[\algo(|\psi_\sigma\>^{\otimes t(n)})=1] - \Pr_\varphi[\algo(|\varphi\>^{\otimes t(n)})=1]\Big| \\
        &\quad< \eta_0(n)+\eta_1(n)
    \end{aligned}
    \end{equation}
    for some $\eta_0,\eta_1\in\negl_{f(n)}$.
    By the first closure property (Definition~\ref{def:closure_property}) of $\negl_{f(n)}$ it holds that $\eta_0(n)+\eta_1(n)\in\negl_{f(n)}$.

    For $\mathbf{T}(n)=\poly f(n)$, the proof is identical to the $\mathbf{T}(n)=f(n)$ case above.
    First, an algorithm $\mathcal{A}$ with runtime $s(N)\in O(\poly f(N))$ given $N$ qubit input has a runtime of $s(nt(n))\in O(\poly f(n))$ given $t(n)\in O(\poly f(n))$ copies of $n$-qubit states, since $s(nt(n)) = \poly(f(n\poly f(n)))$ is a polynomial since $f$ does not grow faster than polynomials.
    Then to show $\poly f(n)$-indistinguishability we use the same hybrid argument, but with $\poly f(n)$-QPRP $\sigma$ and number of copies $t(n)\in O(\poly f(n))$.
    Here we use negligible functions $\negl_{\poly f(n)}$ and repetition function $\mathbf{R}=O(\poly f(n))$ which is consistent with $\poly f(n)$ since for any $r(n),s(n)\in O(\poly f(n))$ it holds that $r(n)s(n)\in O(\poly f(n))$ again because $f$ does not grow faster than polynomials.
    
    As the case for $\mathbf{T}(n)=f(n)$ above, we can show that hybrid 0 with subset state input $|\psi_\sigma\>$ and hybrid 1 with $|\psi_{\\rand_\sigma}\>$ input (both with $t(n)\in O(\poly f(n))$ copies thereof) are $\poly f(n)$-indistinguishable since we are using $\poly f(n)$-QPRPF $f$ and $\poly f(n)$-QPRP $\sigma$.
    Hybrid 1 and Hybrid 2 are also $\poly f(n)$-indistinguishable by Proposition~\ref{prop:subset_state_subset_size}.
    Thus we can show that subset state ensemble $\{|\psi_\sigma\>\}_{\sigma}$ and Haar-random ensemble $\{|\varphi\>\}$ are $\poly f(n)$-indistinguishable by using triangle inequality and the closure property of $\negl_{\poly f(n)}$.
\end{proof}

\section{T-Pseudoresources}\label{sec:T-pseudoresources}

While pseudorandomness alludes to how true randomness can be mimicked using lesser amount of randomness, pseudoresources indicates how objects possessing large amount of resources can be mimicked by those with small amount of resources~\cite{aaronson2022quantum,haug2023pseudorandom,bansal2024pseudorandom,Gu_2024}.
In the quantum regime, the study of pseudoresources show how quantum states with high amount of quantum resources such as coherence, entanglement, and magic can be substituted by states with low resource.
This is done mainly by using computational indistinguishability between two ensembles of states as we have discussed in Section~\ref{sec:pseudorandomness_and_indistinguishability}.
However, so far only polynomial-time indistinguishability has been studied with respect to pseudoresources.
As we have seen so far on how the polynomial-time bounded observers can be replaced by observers which computational runtime is bounded by some class of function $\mathbf{T}$, it is natural to do this generalization to pseudoresources as well.

For a given resource, we can assign a resource measure\footnote{Specifically, a resource measure $Q$ usually is required to satisfy certain properties. The most common required properties are: (1) Faithfulness: $Q$ must assign a 0 value to a prescribed set of ``free states'' $\mathcal{F}$, i.e. $Q(\rho)=0, \forall\rho\in\mathcal{F}$, and (2) Monotonicity: $Q$ must satisfy $Q(\rho) \geq Q(\mathcal{C}(\rho))$ for any state $\rho$ and any $\mathcal{C}$ belonging to a prescribed set of ``free operations'' $\mathcal{O}$. Other nice properties of $Q$ such as convexity, subadditivity, and continuity could also be demanded. This is part of the study of quantum \textit{resource theories} which we will not go into detail. We will instead use resource measures that are commonly used in the literature. Readers who are interested to find out more about resource theory may refer to~\cite{chitambar2019quantum,gour2024resources}.} $Q$ which assigns a (real-number) value $Q(\psi)$ to a quantum state $|\psi\>$.
Note that here we only consider pure quantum states.
Furthermore, for a resource measure $Q$ and quantum state ensembles $\{|\varphi\>\}$ and $\{|\psi\>\}$, we define the \textit{resource gap} between $\{|\varphi\>\}$ and $\{|\psi\>\}$ as
\begin{equation}
    \Delta_Q(\{|\varphi\>\},\{|\psi\>\}) := \Big| \E_\varphi[Q(\varphi)] - \E_\psi[Q(\psi)] \Big| \;.
\end{equation}
For computationally indistinguishable ensembles $\{|\varphi\>\}$ (as in Definition~\ref{def:T_indistinguishable} and $\{|\psi\>\}$ where expected resource $\E_\varphi[Q(\varphi)]$ is larger than the expected resource $\E_\psi[Q(\psi)]$, this indicates that ensemble $\{|\psi\>\}$ acts as a \textit{pseudoresource}, mimicking the high-resource ensemble $\{|\varphi\>\}$ with respect to some computationally-bounded observer.

As we will see later in this section, we can use $\mathbf{T}$-PRS to obtain larger resource gaps for coherence (Table~\ref{tab:observer_computational_power_coherence_gap}), entanglement (Table~\ref{tab:observer_computational_power_entanglement_gap}), and magic (Table~\ref{tab:observer_computational_power_magic_gap}), compared to the usual pseudoresource gap from polynomial-time PRS.
While results on pseudorandom density matrices (PRDM) in~\cite{bansal2024pseudorandom} has shown that the largest amount of resource gap can be obtained from a mixed-state generalization of polynomial-time PRS, i.e. $\E_\psi[Q(\psi)]=0$, $\mathbf{T}$-PRS give intermediate resource gaps between those obtained from polynomial-time PRS and PRDM.

\subsection{Coherence resource gap}\label{sec:coherence}

For an $n$-qubit state $\rho$, the relative entropy of coherence~\cite{baumgratz2014quantifying,streltsov2017colloquium} of $\rho$ is defined as
\begin{equation}
    C(\rho) = H(\rho_\mathrm{diag}) - H(\rho) \;,
\end{equation}
which takes value between $0$ and $n$.
Here $H(\rho) = -\tr(\rho \log\rho)$ is the von Neumann entropy of density matrix $\rho$.
Relative entropy of coherence of $\rho$ admits an operational interpretation as the asymptotic rate of how many copies single-qubit maximally coherent state can be obtained for every $\rho$ to distill a  using incoherent operations (see~\cite{baumgratz2014quantifying},\cite[Section III.C]{streltsov2017colloquium}).

Now let us consider the Hilbert-Schmidt coherence distance of quantum state $\rho$, given by
\begin{equation}
    C_2(\rho) = \min_{\sigma\in\mathcal{F}_c} ||\rho-\sigma||_{\mathrm{HS}}^2 = 1 - \tr(\rho^2 \Pi_{c2})
\end{equation}
for projector $\Pi_{c2} = \bigotimes_{j=1}^n \ketbra{00}+\ketbra{11}$.
For all state $\rho$, the Hilbert-Schmidt coherence distance $C_2(\rho)$ takes value between $0$ and $1-2^{-n}$ and satisfy the relation
\begin{equation}\label{eqn:HS_coherence_RE_coherence_relation}
    C(\rho) \geq -\log(1-C_2(\rho)) \;.
\end{equation}

\begin{proposition}\label{prop:rel_entropy_coherence_gap}
    For any $\mathbf{T}$-indistinguishable ensembles $\{|\varphi\>\}$ and $\{|\psi\>\}$ such that $\E_\varphi[C_2(|\varphi\>)] \geq 1-2^{-\gamma(n)} \geq \E_\psi[C_2(|\psi\>)]$ for some function $\gamma:\N\rightarrow\N$, it holds that
    \begin{equation}\label{eqn:coherence_lowerbound}
    \begin{aligned}
        \E_\psi[C(|\psi\>)] &\geq -\log\bigg(\frac{1}{2^{\gamma(n)}}+\negl_\mathbf{T}(n)\bigg) \;.
    \end{aligned}
    \end{equation}
    and
    \begin{equation}\label{eqn:coherence_gap}
        \Delta_C(\{|\varphi\>\},\{|\psi\>\}) = O(n) +\log\bigg(\frac{1}{2^{\gamma(n)}}+\negl_\mathbf{T}(n)\bigg) \;.
    \end{equation}
\end{proposition}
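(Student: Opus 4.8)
The plan is to lower–bound $\E_\psi[C(|\psi\>)]$ by pushing the pointwise inequality~\eqref{eqn:HS_coherence_RE_coherence_relation} through Jensen's inequality, after first using $\mathbf{T}$-indistinguishability to show that $\E_\psi[1-C_2(|\psi\>)]$ cannot be much larger than $\E_\varphi[1-C_2(|\varphi\>)]$. The key observation is that $1-C_2(|\tau\>) = \tr(\ketbra{\tau}^{\otimes 2}\Pi_{c2})$ is exactly the acceptance probability of an elementary two-copy test: since $\Pi_{c2}=\bigotimes_{j=1}^n(\ketbra{00}+\ketbra{11})$ pairs the $j$-th qubit of copy $1$ with the $j$-th qubit of copy $2$, the algorithm $\algo$ that measures each such matched pair in the computational basis and outputs $1$ iff all $n$ pairs agree satisfies $\Pr_\tau[\algo(|\tau\>^{\otimes 2})=1] = \E_\tau[\tr(\ketbra{\tau}^{\otimes 2}\Pi_{c2})] = 1-\E_\tau[C_2(|\tau\>)]$. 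This $\algo$ consumes only $t(n)=2\in O(1)$ copies.

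Assuming $\algo$ is admissible for the $\mathbf{T}$-indistinguishability of $\{|\varphi\>\}$ and $\{|\psi\>\}$, there is $\eta\in\negl_\mathbf{T}$ with $|\E_\varphi[C_2(|\varphi\>)]-\E_\psi[C_2(|\psi\>)]| = |(1-\E_\varphi[C_2])-(1-\E_\psi[C_2])| < \eta(n)$. Combining this with the hypothesis $\E_\varphi[C_2(|\varphi\>)]\ge 1-2^{-\gamma(n)}$, equivalently $\E_\varphi[1-C_2(|\varphi\>)]\le 2^{-\gamma(n)}$, gives $\E_\psi[1-C_2(|\psi\>)] \le \E_\varphi[1-C_2(|\varphi\>)] + \eta(n) < 2^{-\gamma(n)} + \negl_\mathbf{T}(n)$.

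Next I take $\E_\psi$ of both sides of~\eqref{eqn:HS_coherence_RE_coherence_relation} and apply Jensen's inequality to the convex, decreasing map $x\mapsto-\log x$ (with $x=1-C_2(|\psi\>)\ge 2^{-n}>0$): $\E_\psi[C(|\psi\>)] \ge \E_\psi[-\log(1-C_2(|\psi\>))] \ge -\log\big(\E_\psi[1-C_2(|\psi\>)]\big) \ge -\log\big(2^{-\gamma(n)}+\negl_\mathbf{T}(n)\big)$, which is~\eqref{eqn:coherence_lowerbound}. For the gap~\eqref{eqn:coherence_gap}, I use that $C(|\varphi\>)\le n$, so $\E_\varphi[C(|\varphi\>)]=O(n)$, together with the lower bound just obtained: $\Delta_C(\{|\varphi\>\},\{|\psi\>\}) = |\E_\varphi[C]-\E_\psi[C]| \le n + \log\big(2^{-\gamma(n)}+\negl_\mathbf{T}(n)\big) = O(n)+\log\big(2^{-\gamma(n)}+\negl_\mathbf{T}(n)\big)$; here the hypothesis forces $\gamma(n)\le n$ (otherwise $1-2^{-\gamma(n)}$ exceeds the maximum value $1-2^{-n}$ of $C_2$), so the log term lies in $[-n,0]$ for large $n$ and the same bound also covers the reverse sign of $\E_\varphi[C]-\E_\psi[C]$.

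The step that needs the most care is the admissibility of $\algo$ in the second paragraph: $\algo$ reads all $2n$ qubits and hence costs $\Omega(n)$ sequential time, which lies in $O(\mathbf{T})$ when $\mathbf{T}$ is linearithmic or linear but not when $\mathbf{T}$ is polylogarithmic or logarithmic. In those regimes one must either invoke a weaker accounting of the test's cost (it is a single layer of computational-basis measurements followed by classical post-processing) or replace the exact test by one inspecting only $O(\mathbf{T}(n))$ randomly chosen qubit positions and argue its distinguishing advantage is still $\negl_\mathbf{T}$; everything else in the argument is routine.
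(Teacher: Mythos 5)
Your proof is correct and follows essentially the same route as the paper's: the two-copy $\Pi_{c2}$ test as the distinguisher, $\mathbf{T}$-indistinguishability to transfer the bound on $\E[1-C_2]$ from $\{|\varphi\>\}$ to $\{|\psi\>\}$, the relation $C\ge-\log(1-C_2)$, and the $O(n)$ cap on coherence for the gap. In fact your version is slightly more careful than the paper's: you make the Jensen step $\E_\psi[-\log p(\psi)]\ge-\log\E_\psi[p(\psi)]$ explicit (the paper writes this as an equality) and you state the needed direction $\E_\psi[p(\psi)]\le 2^{-\gamma(n)}+\eta(n)$ correctly (the paper's displayed $\ge$ is a typo), and your closing caveat --- that measuring all $n$ qubit pairs costs $\Omega(n)$ and so is only straightforwardly admissible when $\mathbf{T}\in\Omega(n)$ --- identifies a real gap that the paper's proof leaves unaddressed for the polylogarithmic and logarithmic regimes.
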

\begin{proof}
    We can use the projector $\Pi_{c2}$ as an efficient distinguisher with acceptance probability of $p(\rho)=\tr(\rho^{\otimes2}\Pi_{c2})$ given two copies of $\rho$ as input.
    The expected average acceptance probability for $\{|\varphi\>\}$ is $\E_\varphi[p(\varphi)] \leq 2^{-\gamma(n)}$, whereas the average acceptance probability for $\{|\psi\>\}$ is $\E_\psi[p(\psi)]=1-\E_\psi[C_2(|\psi\>)] \geq 2^{-\gamma(n)}$.
    Since these ensembles are computationally indistinguishable and $\E_\psi[p(\psi)] \geq 2^{-\gamma(n)} \geq \E_\varphi[p(\varphi)]$, it holds that
    \begin{equation}
        \E_\psi[p(\psi)] - \E_{\varphi}[p(\varphi)] = \eta(n) \;,
    \end{equation}
    for $\eta\in\negl_\mathbf{T}$, which implies that
    \begin{equation}
    \begin{aligned}
        \E_\psi[p(\psi)] &= \E_\varphi[p(\varphi)] + \eta(n) \\ 
        &\geq 2^{-\gamma(n)} + \eta(n) \;.
    \end{aligned}
    \end{equation}
    Now by using the relation $C(\rho) \geq -\log(1-C_2(\rho))$ (eqn.~\eqref{eqn:HS_coherence_RE_coherence_relation}) we obtain
    \begin{equation}
    \begin{aligned}
        \E_\psi[C(|\psi\>)] &\geq \E_\psi[-\log(1-C_2(|\psi\>))] \\
        &= \E_\psi[-\log p(\psi)] \\
        &= -\log\bigg(\frac{1}{2^{\gamma(n)}}+\eta(n)\bigg) \;.
    \end{aligned}
    \end{equation}
    which gives us eqn.~\eqref{eqn:coherence_lowerbound}.
    Lastly by observing that the maximum value of the relative entropy of coherence is $\max_\rho C(\rho) = O(n)$ and combining it with eqn.~\eqref{eqn:coherence_lowerbound} we get
    \begin{equation}
    \begin{aligned}
        &\Delta_C(\{|\varphi\>\},\{|\psi\>\}) \\
        &= \E_\varphi[C(\varphi)] + \log\bigg(\frac{1}{2^{\gamma(n)}}+\eta(n)\bigg) \\
        &= O(n) + \log\bigg(\frac{1}{2^{\gamma(n)}}+\eta(n)\bigg)
    \end{aligned}
    \end{equation}
    which gives us eqn.~\eqref{eqn:coherence_gap}.
\end{proof}

\begin{table*}[]
    \centering
    \[
    \begin{array}{|c|c|c|}
        \hline
        \mathbf{T} & \E_\psi[C(\psi)] & \Delta_C(\{\varphi\},\{\psi\}) \leq \\
        \hline
        O(\poly n) & \omega(\log n) & O(n) - \omega(\log n) \\
        O(n\log n) & \omega(1) + \log(n\log n) & O(n) - (\omega(1) + \log(n\log n)) \\
        O(n) & \omega(1) + \log n & O(n) - (\omega(1) + \log n) \\
        O(\poly\log n) & \omega(\log\log n) & O(n) - \omega(\log\log n) \\
        O(\log n) & \omega(1) + \log\log n & O(n) - (\omega(1) + \log\log n) \\
        \hline
    \end{array}
    \]
    \caption{
    Left column: Computational power $\mathbf{T}$ of the observer. 
    Center column: Expected relative entropy of coherence of ensemble $\{|\psi\>\}$ that is indistinguishable from $\{|\varphi\>\}$ to $\mathbf{T}$-time observers. 
    Right column: Upper bound of coherence gap $\Delta_C$ between Haar-random ensemble $\{|\varphi\>\}$ and ensemble $\{|\psi\>\}$.
    The quantities in the center and right columns can be obtained directly from Proposition~\ref{prop:rel_entropy_coherence_gap} by setting $2^{-\gamma(n)} = O(2^{-n})$ and setting $\mathbf{T}(n)$ as in the left column.
    Note that as the computational power of the observer increases, one can see that the average amount of coherence of the pseudo-random ensemble $\E_\psi[C(\psi)]$ decreases, so that the pseudo-coherence gap increases.
    Namely, it cost less coherence to fool a computationally weaker observer.
    }
    \label{tab:observer_computational_power_coherence_gap}
\end{table*}

If we set ensemble $\{|\varphi\>\}$ as the Haar-random state, its expected coherence and expected Hilbert-Schmidt coherence distance are
\begin{equation}
\begin{gathered}
    \E_\varphi[C(\varphi)] = \sum_{k=2}^{2^n}\frac{1}{k} = O(n) \\
    \text{and}\\
    \E_\varphi[C_2(\varphi)] = 1-\frac{2}{2^n+1} \geq 1-O(2^{-n}) \;,
\end{gathered}
\end{equation}
respectively.
The following are the expected coherence of ensemble $\{|\psi\>\}$ that is $\mathbf{T}$-indistinguishable to the Haar-random ensemble $\{|\varphi\>\}$ to observers with different computational power $\mathbf{T}$.
The results are summarized in Table~\ref{tab:observer_computational_power_coherence_gap}.
\begin{enumerate}
    \item For a $\poly$-time observer (i.e. $\mathbf{T}(n) = \poly n = n^{O(1)}$), it holds that
    \begin{equation}
    \begin{aligned}
        O(2^{-n})+\frac{1}{\poly n} = 2^{-O(\log n)} \;,
    \end{aligned}
    \end{equation}
    since $\poly n = n^{O(1)} = 2^{O(1)\log n} = 2^{O(\log n)}$ and since $O(2^{-n})$ grows slower than $2^{-\omega(\log n)}$.
    Thus, by Proposition~\ref{prop:rel_entropy_coherence_gap}, we obtain
    \begin{equation}
        \E_\psi[C(\psi)] \geq -\log( 2^{-O(\log n)} ) = \omega(\log n) \;,
    \end{equation}
    which agrees with the bound in~\cite[Appendix I]{haug2023pseudorandom}.
    
    \item For a linearithmic time observer (i.e. $\mathbf{T}(n) = O(n\log n)$, first note that for $g(n) \in O(n\log n)$, it holds that there exists $c>0$ and $N\in\N$ such that $g(n) < cn\log n$ if $n>N$, which is equivalent to $g(n) \in O(1) n\log n$.
    Thus we can obtain the equivalence $O(n\log n) = 2^{O(1) + \log(n\log n)}$, which gives
    \begin{equation}
    \begin{aligned}
        &O(2^{-n}) + \frac{1}{O(n\log n)} \\
        &= O(2^{-n}) + 2^{-O(1) - \log(n\log n)} \\
        &= 2^{-O(1) - \log(n\log n)} \;,
    \end{aligned}
    \end{equation}
    as any $g(n) \in O(2^{-n})$ also satisfy $g(n)\in 2^{-O(1) - \log(n\log n)}$.
    Thus by Proposition~\ref{prop:rel_entropy_coherence_gap}, the expected relative entropy of coherence of $\{|\psi\>\}$ is lower bounded as
    \begin{equation}
    \begin{aligned}
        \E_\psi[C(\psi)] &\geq -\log\big(2^{-O(1) - \log(n\log n)}\big) \\
        &= \omega(1) + \log(n\log n) \;.
    \end{aligned}
    \end{equation}
    
    \item For a linear-time observer (i.e. $\mathbf{T}(n)=O(n)$), by using the equivalence $O(n) = 2^{O(1) + \log n}$ we have
    \begin{equation}
    \begin{aligned}
        O(2^{-n}) + \frac{1}{O(n)} = 2^{-O(1)-\log n} \;.
    \end{aligned}
    \end{equation}
    Thus the expected relative entropy of coherence of $\{|\psi\>\}$ is lower bounded as
    \begin{equation}
    \begin{aligned}
        \E_\psi[C(\psi)] &\geq -\log\Big(2^{-O(1)-\log n}\Big) \\
        &= \omega(1) + \log n \;.
    \end{aligned}
    \end{equation}

    \item For polylogarithmic time observer (i.e. $\mathbf{T}(n) = O(\poly\log(n))$), first note that $\poly\log n = \log^{O(1)}n = 2^{\log(\log^{O(1)} n)} = 2^{O(\log\log n)}$.
    Hence we obtain
    \begin{equation}
    \begin{aligned}
        2^{-\gamma(n)} + \mathbf{T}(n)^{-1} &= O(2^{-n}) + 2^{-O(\log\log n)} \\
        &= 2^{-O(\log\log n)} \;.
    \end{aligned}
    \end{equation}
    Then by using Proposition~\ref{prop:rel_entropy_coherence_gap} this gives
    \begin{equation}
    \begin{aligned}
        \E_\psi[C(|\psi\>)] &\geq -\log\Big( 2^{-O(\log\log n)} \Big) \\
        &= \omega(\log\log n) \;.
    \end{aligned}
    \end{equation}
    
    \item For logarithmic time observer (i.e. $\mathbf{T}(n) = O(\log n)$), we have the equivalence $O(\log n) = 2^{O(1) + \log\log n}$ which gives
    \begin{equation}
    \begin{aligned}
        O(2^{-n}) + 2^{-O(1) - \log\log n} = 2^{-O(1) - \log\log n} \;,
    \end{aligned}
    \end{equation}
    since $O(2^{-n})$ grows slower than $2^{-O(1) - \log\log n}$.
    Thus the expected relative entropy of coherence of $\{|\psi\>\}$ is lower bounded as
    \begin{equation}
    \begin{aligned}
        \E_\psi[C(\psi)] &\geq -\log\big(2^{-O(1) - \log\log n}\big) \\
        &= \omega(1) + \log\log n \;.
    \end{aligned}
    \end{equation}
\end{enumerate}

\subsection{Entanglement resource gap}\label{sec:entanglement}

\begin{table*}[]
    \centering
    \[
    \begin{array}{|c|c|c|}
        \hline
        \mathbf{T}(n) & \E_\psi[E(\psi)] & \Delta_E(\{\varphi\},\{\psi\}) \leq \\
        \hline
        O(\poly n) & \omega(\log n) & O(n) - \omega(\log n) \\
        O(n\log n) & \omega(1) + \log(n\log n) & O(n) - (\omega(1) + \log(n\log n)) \\
        O(n) & \omega(1) + \log n & O(n) - (\omega(1) + \log n) \\
        O(\poly\log n) & \omega(\log\log n) & O(n) - \omega(\log\log n) \\
        O(\log n) & \omega(1) + \log\log n & O(n) - (\omega(1) + \log\log n) \\
        \hline
    \end{array}
    \]
    \caption{
    Left column: Computational power $\mathbf{T}$ of the observer. 
    Center column: Expected entanglement entropy $\E_\psi[E(\psi)]=f(n)$ of $n$-qubit ensemble $\{|\psi\>\}$ over partition $A:B$ with $\dim A=2^{n_A}$ and $\dim B=2^{n_B}$ with $n_A\leq n_B$ and $n_A\in\Omega(f(n))$ and $n_A+n_B=n$ of ensemble $\{|\psi\>\}$ that is indistinguishable from $\{|\varphi\>\}$ to a $\mathbf{T}$-observer.
    Right column: Upper bound of entanglement gap $\Delta_E$ between Haar-random ensemble $\{|\varphi\>\}$ and ensemble $\{|\psi\>\}$.
    The quantities in the center and right columns can be obtained directly from Proposition~\ref{prop:entanglement_gap} by setting $2^{-\gamma(n)} = O(2^{-n})$ and setting $\mathbf{T}(n)$ as in the left column, with negligible functions $\negl_\mathbf{T}$.
    }
    \label{tab:observer_computational_power_entanglement_gap}
\end{table*}

Here we use entanglement entropy as a measure of how much entanglement does a quantum state has.
\textit{Entanglement entropy} of a bipartite $n$-qubit pure quantum state $|\psi\>$ over system partition $A\otimes B$ with dimensions $\dim A = 2^{n_A}$ and $\dim B=2^{n_B}$ and $n_A+n_B=n$ is given by
\begin{equation}
    E(\psi) = H(\tr_A(\ketbra{\psi})) = H(\tr_B(\ketbra{\psi}))
\end{equation}
where $\tr_A$ ($\tr_B$) denotes a partial trace on system $A$ (system $B$).
Note that the equality between the entropy of $|\psi\>$ reduces to system $A$ and $B$ follows from the fact that the entropy of bipartite $A:B$ pure quantum states reduced to either of the systems $A$ or $B$ is equal.

Operationally, it has been shown that the entanglement entropy of a bipartite pure state $|\psi\>$ correspond to both~\cite{bennett1996concentrating,berta2024tangled}: 
(1) the entanglement cost of $|\psi\>$, i.e. the asymptotic rate of how many two-qubit maximally entangled states $|\phi\>$ is needed to obatain a copy of $|\psi\>$ using only local operations and classical communications (LOCC) operations, and 
(2) the distillable entanglement of $|\psi\>$, i.e. the asymptotic rate of how many copies of $|\phi\>$ can be obtained per copy of $|\psi\>$ using LOCC operations.

A useful tool used to show a lower bound of entanglement entropy of PRS~\cite{ji2018pseudorandom,aaronson2022quantum} is the SWAP test.
The SWAP test takes two copies of quantum states $\rho$ and outputs ``accept'' with probability
\begin{equation}\label{eqn:swap_test_proba}
    p_\mathrm{SW}(\rho) = \frac{1}{2}(1+\tr(\rho^2)) = \frac{1}{2}(1+2^{-H_2(\rho)})
\end{equation}
where $H_2(\rho)=-\tr(\rho^2)$ is the quantum collision entropy.
Now we state our result characterizing the entanglement entropies of two indistinguishable ensemble of quantum states.

\begin{proposition}\label{prop:entanglement_gap}
    Consider a $\mathbf{T}$-computationally indistinguishable ensembles $\{|\varphi\>\}$ and $\{|\psi\>\}$ over $n$ qubit system partitioned as $A\otimes B$ where $\dim A=2^{n_A}$ and $\dim B = 2^{n_B}$ and $n_A\leq n_b$ and $n_A+n_B=n$, such that $\E_\varphi[H_2(\varphi_A)] \geq \xi(n) \geq \E_\psi[H_2(\psi_A)]$ for some function $\xi:\N\rightarrow\N$. 
    It holds that
    \begin{equation}
        \E_\psi[E(\psi)] \geq -\log\Big( \frac{1}{2^{\xi(n)}} + \eta(n) \Big)
    \end{equation}
    and
    \begin{equation}
        \Delta_E(\{|\varphi\>\}\,,\,\{|\psi\>\}) = O(n) + \log\Big( \frac{1}{2^{\xi(n)}} + \eta(n) \Big) \;,
    \end{equation}
    for some $\eta\in\negl_\mathbf{T}$.
    Here, by taking the expected entanglement entropy of $n$-qubit ensemble $\{|\psi\>\}$ as a function in $n$: $\E_\psi[E(\psi)]=f(n)$, we also assume that $n_A\in \Omega(f(n))$.
\end{proposition}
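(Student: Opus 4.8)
The plan is to follow the template of the proof of Proposition~\ref{prop:rel_entropy_coherence_gap}, with the Hilbert--Schmidt coherence measurement replaced by a SWAP test on the subsystem $A$, and the inequality $C(\rho)\ge-\log(1-C_2(\rho))$ replaced by the monotonicity of the R\'enyi entropies, $E(\psi)=H(\psi_A)=H_1(\psi_A)\ge H_2(\psi_A)=-\log\tr(\psi_A^2)$. The distinguisher $\algo$ I would use takes two copies of an $n$-qubit state $\rho$, discards the two $B$ registers, and runs the SWAP test on the two copies of $\rho_A$; by eqn.~\eqref{eqn:swap_test_proba} it accepts with probability $p_{\mathrm{SW}}(\rho)=\tfrac12\big(1+\tr(\rho_A^2)\big)=\tfrac12\big(1+2^{-H_2(\rho_A)}\big)$. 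This uses only two copies and a circuit of size $O(n_A)$, so for a partition with $n_A\in O(\mathbf{T}(n))$ it is a legitimate $\mathbf{T}$-time distinguisher.

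First I would evaluate the average acceptance probabilities. On $\{|\varphi\>\}$ one has $\E_\varphi[p_{\mathrm{SW}}(\varphi)]=\tfrac12\big(1+\E_\varphi[\tr(\varphi_A^2)]\big)$, and the collision-entropy hypothesis on $\{|\varphi\>\}$ gives $\E_\varphi[\tr(\varphi_A^2)]\le 2^{-\xi(n)}$, hence $\E_\varphi[p_{\mathrm{SW}}(\varphi)]\le\tfrac12\big(1+2^{-\xi(n)}\big)$; for the Haar ensemble used in the applications this is simply the exact value $\E_\varphi[\tr(\varphi_A^2)]=(2^{n_A}+2^{n_B})/(2^n+1)$. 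On $\{|\psi\>\}$, the hypothesis $\E_\psi[H_2(\psi_A)]\le\xi(n)$ together with convexity of $x\mapsto 2^{-x}$ gives $\E_\psi[p_{\mathrm{SW}}(\psi)]\ge\tfrac12\big(1+2^{-\xi(n)}\big)\ge\E_\varphi[p_{\mathrm{SW}}(\varphi)]$, so the test favours $\{|\psi\>\}$. Applying $\mathbf{T}$-indistinguishability (Definition~\ref{def:T_indistinguishable}) to $\algo$ yields some $\eta\in\negl_\mathbf{T}$ with $\E_\psi[p_{\mathrm{SW}}(\psi)]-\E_\varphi[p_{\mathrm{SW}}(\varphi)]<\eta(n)$, and therefore $\E_\psi[\tr(\psi_A^2)]=2\E_\psi[p_{\mathrm{SW}}(\psi)]-1\le 2^{-\xi(n)}+2\eta(n)$.

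Combining this with $E(\psi)=H(\psi_A)\ge H_2(\psi_A)$ and Jensen's inequality for the concave map $-\log$ gives the first claim:
\begin{equation}
\begin{aligned}
\E_\psi[E(\psi)] &\ge \E_\psi\big[-\log\tr(\psi_A^2)\big] \ge -\log\E_\psi[\tr(\psi_A^2)]\\
&\ge -\log\!\Big(2^{-\xi(n)}+2\eta(n)\Big),
\end{aligned}
\end{equation}
and $2\eta\in\negl_\mathbf{T}$ by the second closure property (Definition~\ref{def:closure_property}, with a constant repeat function), which I would absorb into $\negl_\mathbf{T}(n)$. For the gap, I would bound $\E_\varphi[E(\varphi)]\le n_A\le n=O(n)$ (the entanglement entropy over the partition is at most $\log\dim A$) and subtract the lower bound just obtained, giving $\Delta_E(\{|\varphi\>\},\{|\psi\>\})\le O(n)+\log\big(2^{-\xi(n)}+\negl_\mathbf{T}(n)\big)$; the stated hypothesis $n_A\in\Omega(f(n))$ is exactly what keeps this consistent, since $\E_\psi[E(\psi)]=f(n)$ cannot exceed $n_A$.

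The step I expect to require the most care is the $\{|\varphi\>\}$-side bound: deducing $\E_\varphi[\tr(\varphi_A^2)]\le 2^{-\xi(n)}$ from $\E_\varphi[H_2(\varphi_A)]\ge\xi(n)$ runs Jensen's inequality in the unfavourable direction, so one should either read the $\{|\varphi\>\}$ hypothesis directly as a bound on the average purity $\E_\varphi[\tr(\varphi_A^2)]$, or invoke concentration of $H_2(\varphi_A)$ about its mean for the specific reference ensemble --- both of which are immediate for the Haar ensemble underlying every application in Section~\ref{sec:T-pseudoresources}. Everything else --- evaluating the SWAP-test acceptance probabilities and absorbing $O(1)$ factors into $\negl_\mathbf{T}$ --- is routine given the closure and consistency properties established in Section~\ref{sec:negligible_distinguishability}.
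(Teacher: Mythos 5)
Your proof follows essentially the same route as the paper's: a SWAP test on subsystem $A$ as the $\mathbf{T}$-bounded distinguisher, the chain $E(\psi)=H(\psi_A)\ge H_2(\psi_A)=-\log(2p_{\mathrm{SW}}-1)$, and Jensen's inequality to pass from the average acceptance probability to the entropy lower bound and then to the gap via $\E_\varphi[E(\varphi)]=O(n)$. If anything you are more careful than the paper, which silently applies Jensen in the unfavourable direction on the $\{|\varphi\>\}$ side (asserting $\E_\varphi[2^{-H_2(\varphi_A)-1}]\le 2^{-\xi(n)-1}$ from $\E_\varphi[H_2(\varphi_A)]\ge\xi(n)$) and calls the SWAP test a constant-size circuit; your observations that the $\{|\varphi\>\}$ hypothesis should be read as a bound on average purity (or handled by concentration for the Haar ensemble) and that the distinguisher costs $O(n_A)$ address exactly those points.
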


\begin{proof}
    First we use the fact that $H(\rho)\geq H_2(\rho)$ for all states $\rho$ and then express the collision entropy $H_2$ in terms of the accept probability of the SWAP test in eqn.~\eqref{eqn:swap_test_proba} to obtain
    \begin{equation}\label{eqn:vonneumann_entropy_renyi_swap_test_inequality}
    \begin{gathered}
        H(\rho) \geq H_2(\rho) = -\log( 2p_\mathrm{SW}(\rho) - 1 ) \;.
    \end{gathered}
    \end{equation}
    Indistinguishability between ensembles $\{|\psi\>\}$ and $\{|\varphi\>\}$ to a $\mathbf{T}$-bounded observer implies that
    \begin{equation}
        \big| \E_\psi[p_\mathrm{SW}(\psi_A)] - \E_\varphi[p_\mathrm{SW}(\varphi_A)] \big| = \eta(n) \;,
    \end{equation}
    for some $\eta\in\negl_\mathbf{T}$ by taking the SWAP test as a constant-size quantum circuit acting as a distinguisher.
    Since $\E_\varphi[H_2(\varphi_A)] \geq \xi(n) \geq \E_\psi[H_2(\psi_A)]$ by assumption, we have 
    \begin{equation}\label{eqn:swap_accept_upper_bound}
    \begin{aligned}
        \E_\psi[p_\mathrm{SW}(\psi_A)] &= \E_\varphi[p_\mathrm{SW}(\varphi_A)] + \eta(n) \\
        &= \frac{1}{2} + \E_\varphi[2^{-H_2(\varphi_A)-1}] + \eta(n) \\
        &\leq \frac{1}{2} + 2^{-\xi(n)-1} + \eta(n) \;.
    \end{aligned}
    \end{equation}
    
    By eqn.~\eqref{eqn:vonneumann_entropy_renyi_swap_test_inequality} and eqn.~\eqref{eqn:swap_accept_upper_bound} the average entanglement entropy of ensemble $\{|\psi\>\}$ can be lower bounded as
    \begin{equation}
    \begin{aligned}
        \E_\psi[E(\psi)] &= \E_\psi[H(\psi_A)] \\
        &\geq \E_\psi[H_2(\psi_A)] \\
        &= \E_\psi\big[-\log( 2p_\mathrm{SW}(\psi_A) - 1 )\big] \\
        &\geq -\log\Big( \frac{1}{2^{\xi(n)}} + \eta(n) \Big) \;,
    \end{aligned}
    \end{equation}
    which can be rewritten as
    \begin{equation}
        \E_\psi[E(\psi)] \geq -\log\Big( \frac{1}{2^{\xi(n)}} + \frac{1}{\mathbf{T}(n)} \Big)
    \end{equation}
    since $\eta(n)<\frac{1}{g(n)}$ for all $g\in\mathbf{T}$.
\end{proof}

If $\{|\varphi\>\}$ is a Haar-random ensemble, then its expected entanglement entropy and expected R\'enyi-2 entanglement entropy over partition $A:B$ with $\dim A=2^{n_A}$ and $\dim B=2^{n_B}$ with $n_A\leq n_B$ and $n_A+n_B=n$ is given by
\begin{equation}
\begin{gathered}
    \E_\varphi[E(\varphi)] = \min\{n_A,n_B\} - O(1) = n_A-O(1) \\
    \text{and}\\
    \E_\varphi[H_2(\varphi_A)] = -\log\Big(\frac{2^{n_A} + 2^{n_B}}{2^n+1}\Big) \;,
\end{gathered}
\end{equation}
which gives $\xi(n) \in O(n_A) \subseteq O(n)$.

Now we give a lower bound for expected entanglement entropy of $n$-qubit ensemble $\{|\psi\>\}$ with low-entanglement that is $\mathbf{T}$-indistinguishable from $n$-qubit Haar-random ensemble $\{|\varphi\>\}$ for different $\mathbf{T}$.
The results are summarized in Table~\ref{tab:observer_computational_power_entanglement_gap}.
The derivations are similar to that of relative entropy of coherence.
\begin{enumerate}
    \item For a poly-time observer (i.e. $O(\mathbf{T}) = O(\poly(n))$), it holds that for $\eta\in\negl_{\poly n}$
    \begin{equation}
        \frac{1}{2^{\xi(n)}} + \eta(n) < O(2^{-n})+\frac{1}{\poly n} = 2^{-O(\log n)} \;.
    \end{equation}
    Hence by Proposition~\ref{prop:entanglement_gap},
    \begin{equation}
    \begin{aligned}
        \E_\psi[E(\psi)] 
        &\geq -\log(2^{-O(\log n}) \\
        &= \omega(\log n) \;,
    \end{aligned}
    \end{equation}
    which matches the bound in~\cite{aaronson2022quantum}.

    \item For linearithmic-time ($O(n\log n)$) observer, it holds that for $\eta\in\negl_{n\log n}$
    \begin{equation}
    \begin{aligned}
        \frac{1}{2^{\xi(n)}} + \eta(n) &< O(2^{-n}) + \frac{1}{O(n\log n)} \\
        &= O(2^{-n}) + 2^{-O(1)-\log(n\log n)}
    \end{aligned}
    \end{equation}
    Hence by Proposition~\ref{prop:entanglement_gap},
    \begin{equation}
    \begin{aligned}
        \E_\psi[E(\psi)] 
        &\geq -\log\Big( O(2^{-n}) + 2^{-O(1)-\log(n\log n)} \Big) \\
        &= \omega(1) + \log(n\log n) \;.
    \end{aligned}
    \end{equation}

    \item For linear-time observer ($O(n)$), it holds that for $\eta\in\negl_{n}$
    \begin{equation}
    \begin{aligned}
        \frac{1}{2^{\xi(n)}} + \eta(n) < O(2^{-n}) + \frac{1}{O(n)} = 2^{-O(1)-\log n}
    \end{aligned}
    \end{equation}
    Hence by Proposition~\ref{prop:entanglement_gap},
    \begin{equation}
    \begin{aligned}
        \E_\psi[E(\psi)] 
        &\geq -\log\Big( O(2^{-n}) + 2^{-O(1)-\log n} \Big) \\
        &= \omega(1) + \log n \;.
    \end{aligned}
    \end{equation}

    \item For polylogarithmic-time observer ($O(\poly\log n)$), it holds that for $\eta\in\negl_{\poly\log n}$
    \begin{equation}
    \begin{aligned}
        \frac{1}{2^{\xi(n)}} + \eta(n) < O(2^{-n}) + 2^{-O(\log\log n)}
    \end{aligned}
    \end{equation}
    Hence by Proposition~\ref{prop:entanglement_gap},
    \begin{equation}
    \begin{aligned}
        \E_\psi[E(\psi)] 
        &\geq -\log\Big( O(2^{-n}) + 2^{-O(\log\log n)} \Big) \\
        &= \omega(\log\log n) \;.
    \end{aligned}
    \end{equation}

    \item For logarithmic-time observer ($O(\log n)$), it holds that for $\eta\in\negl_{\log n}$
    \begin{equation}
    \begin{aligned}
        \frac{1}{2^{\xi(n)}} + \eta(n) < O(2^{-n}) + 2^{-O(1)-\log\log n}
    \end{aligned}
    \end{equation}
    Hence by Proposition~\ref{prop:entanglement_gap},
    \begin{equation}
    \begin{aligned}
        \E_\psi[E(\psi)] 
        &\geq -\log\Big( O(2^{-n}) + 2^{-O(1)-\log\log n} \Big) \\
        &= \omega(1) + \log\log n \;.
    \end{aligned}
    \end{equation}
\end{enumerate}

\subsection{Magic resource gap}\label{sec:magic}

Stabilizer R\'enyi-$\alpha$ entropy~\cite{leone2022stabilizer,haug2023stabilizer} of $n$-qubit state $\rho$ is given by
\begin{equation}\label{eqn:stab_renyi_entropy}
    M_\alpha(\rho) = \frac{1}{1-\alpha} \log\bigg( \frac{1}{2^n}\sum_{P\in\mathcal{P}_n} \big(\tr(P\rho)\big)^{2\alpha} \bigg) \;,
\end{equation}
where $\mathcal{P}_n$ is the set of all $n$-qubit Paulis modulo phases $-I,\pm iI$.

Here we use the Hadamard test~\cite{haug2024efficient,Gu_2024} which uses $2\alpha$ copies (for odd $\alpha$) of $n$-qubit state $\rho$ accepts with probability
\begin{equation}
\begin{aligned}
    p_\mathrm{H}^{(2\alpha)}(\rho) = \frac{1+\tr(\Pi^{(2\alpha)}\rho^{\otimes 2\alpha})}{2}
\end{aligned}
\end{equation}
where $\Pi^{(2\alpha)} = \frac{1}{2^n}\sum_{P\in\mathcal{P}_n} P^{\otimes 2\alpha}$.
Note that we can express the stabilizer R\'enyi-$\alpha$ entropy in terms of $\Pi^{(2\alpha)}$ as
\begin{equation}
    M_\alpha(\rho) = \frac{1}{1-\alpha}\log \tr\Big(\Pi^{(2\alpha)} \rho^{\otimes 2\alpha}\Big) \;.
\end{equation}
Hence accepting probability of the Hadamard test using $2\alpha$ copies and the stabilizer R\'enyi-$\alpha$ entropy can be expressed in terms of one another as
\begin{equation}\label{eqn:stab_entropy_hadamard_test_correspondence}
\begin{gathered}
    M_\alpha(\rho) = \frac{1}{1-\alpha}\log\Big( 2p_\mathrm{H}^{(2\alpha)}(\rho) - 1 \Big) \\
    \text{and}\\
    p_\mathrm{H}^{(2\alpha)}(\rho) = \frac{1}{2}\Big( 1 + 2^{(1-\alpha)M_\alpha(\rho)} \Big) \;.
\end{gathered}
\end{equation}

\begin{proposition}\label{prop:stabilizer_entropy_gap}
    Let $\alpha\geq2$ be an odd integer with $\alpha=h(n)$ such that $s(nh(n))\in O(\mathbf{T}(n))$ for all $s\in O(\mathbf{T}(n))$.
    Then, for $\mathbf{T}$-computationally indistinguishable ensembles $\{|\varphi\>\}$ and $\{|\psi\>\}$ such that $\E_\varphi[M_\alpha(\varphi)] \geq \tau(n) \geq \E_\psi[M_\alpha(\psi)]$ for some function $\tau:\N\rightarrow\N$, it holds that
    \begin{equation}\label{eqn:stab_entropy_lowerbound}
    \begin{aligned}
        \E_\psi[M_\alpha(\psi)] &\geq -\frac{\log(\negl_\mathbf{T}(n)) + \frac{2^{-(\alpha-1)\tau(n)}}{\negl_\mathbf{T}(n)} }{\alpha-1} \;.
    \end{aligned}
    \end{equation}
    and
    \begin{equation}\label{eqn:stab_entropy_gap}
        \Delta_{M_\alpha}(\{|\varphi\>\},\{|\psi\>\}) \leq O(n) + \frac{\log(\negl_\mathbf{T}(n)) + \frac{2^{-(\alpha-1)\tau(n)}}{\negl_\mathbf{T}(n)} }{\alpha-1} \;.
    \end{equation}
\end{proposition}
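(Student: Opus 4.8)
The plan is to reuse the template of Propositions~\ref{prop:rel_entropy_coherence_gap} and~\ref{prop:entanglement_gap}, replacing the coherence projector / SWAP test by the Hadamard test on $2\alpha=2h(n)$ copies and the linear coherence--entropy relations by the correspondence in eqn.~\eqref{eqn:stab_entropy_hadamard_test_correspondence}. First I would fix the distinguisher to be the Hadamard test, which on $2\alpha$ copies of an $n$-qubit state $\rho$ accepts with probability $p_\mathrm{H}^{(2\alpha)}(\rho)=\tfrac12(1+2^{(1-\alpha)M_\alpha(\rho)})$. The hypothesis that $\alpha=h(n)$ obeys $s(nh(n))\in O(\mathbf{T}(n))$ for every $s\in O(\mathbf{T}(n))$ is exactly what makes this test an admissible $\mathbf{T}$-bounded distinguisher in the sense of Definition~\ref{def:T_indistinguishable} (it plays the role of the copy-count condition $s(nt(n))\in O(\mathbf{T}(n))$ there, now with $t=2h(n)$). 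Hence $\mathbf{T}$-indistinguishability of $\{|\varphi\>\}$ and $\{|\psi\>\}$ gives $\big|\E_\psi[p_\mathrm{H}^{(2\alpha)}(\psi)]-\E_\varphi[p_\mathrm{H}^{(2\alpha)}(\varphi)]\big|=\eta(n)$ for some $\eta\in\negl_\mathbf{T}$.

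Next I would pass to $q(\rho):=2p_\mathrm{H}^{(2\alpha)}(\rho)-1=2^{-(\alpha-1)M_\alpha(\rho)}\in(0,1]$. Linearity turns the indistinguishability bound into $\E_\psi[q(\psi)]\le \E_\varphi[q(\varphi)]+2\eta(n)$, and the assumption $\E_\varphi[M_\alpha(\varphi)]\ge\tau(n)$ is used to bound $\E_\varphi[q(\varphi)]=\E_\varphi[2^{-(\alpha-1)M_\alpha(\varphi)}]\le 2^{-(\alpha-1)\tau(n)}$, so $\E_\psi[q(\psi)]\le 2^{-(\alpha-1)\tau(n)}+2\eta(n)$. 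On the other side, writing $M_\alpha(\psi)=\tfrac{1}{1-\alpha}\log q(\psi)=-\tfrac{1}{\alpha-1}\log q(\psi)$ and using concavity of $\log$ (Jensen), $\E_\psi[M_\alpha(\psi)]=-\tfrac{1}{\alpha-1}\E_\psi[\log q(\psi)]\ge -\tfrac{1}{\alpha-1}\log\E_\psi[q(\psi)]$. Combining the two and then applying $\log(1+x)\le x$ to split $\log\!\big(2^{-(\alpha-1)\tau(n)}+2\eta(n)\big)\le \log(2\eta(n))+\tfrac{2^{-(\alpha-1)\tau(n)}}{2\eta(n)}$ yields eqn.~\eqref{eqn:stab_entropy_lowerbound} after setting $\eta'=2\eta\in\negl_\mathbf{T}$ (the constant is absorbed by the first closure property). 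For eqn.~\eqref{eqn:stab_entropy_gap} I would combine this lower bound on $\E_\psi[M_\alpha(\psi)]$ with the elementary upper bound $\E_\varphi[M_\alpha(\varphi)]\le\max_\rho M_\alpha(\rho)\le n/(\alpha-1)=O(n)$, which follows from keeping only the $P=I$ term in eqn.~\eqref{eqn:stab_renyi_entropy}, and apply the triangle inequality to $\Delta_{M_\alpha}=\E_\varphi[M_\alpha(\varphi)]-\E_\psi[M_\alpha(\psi)]$.

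The step I expect to be the main obstacle is the bound $\E_\varphi[2^{-(\alpha-1)M_\alpha(\varphi)}]\le 2^{-(\alpha-1)\tau(n)}$: since $x\mapsto 2^{-(\alpha-1)x}$ is convex, Jensen's inequality runs the wrong way, so this does not follow from $\E_\varphi[M_\alpha(\varphi)]\ge\tau(n)$ alone. I would handle this either by reading the hypothesis as a control on the exponential moment $\E_\varphi[2^{-(\alpha-1)M_\alpha(\varphi)}]$ itself, or by invoking a concentration statement for $M_\alpha$ on the high-magic ensemble $\{|\varphi\>\}$ (both hold for the Haar ensemble used in the subsequent applications, where $\tau(n)$ is taken $O(n)$-sized and $2^{-(\alpha-1)\tau(n)}$ is exponentially small). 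A secondary point worth checking is that the Hadamard-test circuit on $2h(n)$ copies genuinely stays within the $\mathbf{T}$-bound once its $\poly(\alpha,n)$ gate count is composed with the $O(\mathbf{T})$ runtime budget -- this is precisely where the non-standard hypothesis on $h$ is consumed, and it mirrors Remark~\ref{rem:number_of_copies}.
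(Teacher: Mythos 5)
Your proposal follows essentially the same route as the paper: use the Hadamard test on $2\alpha=2h(n)$ copies as the admissible $\mathbf{T}$-bounded distinguisher (consuming the hypothesis on $h$ exactly as you describe), convert $\mathbf{T}$-indistinguishability into a $\negl_\mathbf{T}$ bound on the acceptance probabilities, translate back to $M_\alpha$ via eqn.~\eqref{eqn:stab_entropy_hadamard_test_correspondence}, and finish with $\log(a+b)=\log a+\log(1+b/a)$ and $\log(1+x)\le x$; the gap bound then comes from $\E_\varphi[M_\alpha(\varphi)]=O(n)$ for the Haar ensemble. Two points of comparison are worth recording. First, your use of Jensen on the $\psi$-side, $\E_\psi[M_\alpha(\psi)]=-\tfrac{1}{\alpha-1}\E_\psi[\log q(\psi)]\ge-\tfrac{1}{\alpha-1}\log\E_\psi[q(\psi)]$, is \emph{more} careful than the paper, which silently replaces $\E_\psi[2^{(1-\alpha)M_\alpha(\psi)}]$ by $2^{(1-\alpha)\E_\psi[M_\alpha(\psi)]}$; your inequality goes in the direction needed, so this step is genuinely repaired. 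Second, the obstacle you flag on the $\varphi$-side is real, but it is not something you failed to see in the intended argument --- the paper commits the same interchange there too, writing $2^{-(\alpha-1)\E_\varphi[M_\alpha(\varphi)]}\le2^{-(\alpha-1)\tau(n)}$ where the Hadamard test actually delivers $\E_\varphi[2^{-(\alpha-1)M_\alpha(\varphi)}]$, and convexity makes Jensen run the wrong way exactly as you say. Your proposed fixes (read the hypothesis as a bound on the exponential moment $\E_\varphi[2^{-(\alpha-1)M_\alpha(\varphi)}]\le2^{-(\alpha-1)\tau(n)}$, or invoke concentration of $M_\alpha$ for the Haar ensemble, both of which hold in the application where $\tau(n)\in O(n)$) are the correct way to make the statement rigorous; the paper's proof as written does not address this and would need the same patch.
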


\begin{proof}
    For $n$-qubit quantum state ensembles $\{|\psi\>\}$ and $\{|\varphi\>\}$ that are $\mathbf{T}$-indistinguishable, therefore for any quantum algorithm $\mathcal{A}$ with runtime bounded by $s\in O(\mathbf{T})$ it must hold that
    \begin{equation}
        \Big| \E_\psi[\mathcal{A}(\psi^{\otimes t(n)})=1] - \E_\varphi[\mathcal{A}(\varphi^{\otimes t(n)})=1] \Big| = \eta(n)
    \end{equation}
    for any $t(n)$ such that $s(nt(n))\in O(\mathbf{T}(n))$ and $\eta\in\negl_\mathbf{T}$.
    Thus if $\mathcal{C}$ is the Hadamard test circuit and $\alpha = t(n)/2$ we have
    \begin{equation}
        \Big| \E_\psi[p_\mathrm{H}^{(2\alpha)}(\psi)] - \E_\varphi[p_\mathrm{H}^{(2\alpha)}(\varphi)] \Big| = \eta(n) \;,
    \end{equation}
    for some $\eta\in\negl_\mathbf{T}$.
    Thus by eqn.~\eqref{eqn:stab_entropy_hadamard_test_correspondence} it holds that
    \begin{equation}
    \begin{aligned}
        \frac{1}{2}\Big| 2^{(1-\alpha)\E_\psi[M_\alpha(\psi)]} - 2^{(1-\alpha)\E_\varphi[M_\alpha(\varphi)]} \Big| &= \eta(n) \;.
    \end{aligned}
    \end{equation}
    Since $\E_\varphi[M_\alpha(\varphi)] \geq \tau(n) \geq \E_\psi[M_\alpha(\psi)]$ and $\alpha>1$ we have
    \begin{equation}\label{eqn:stabilizer_entropy_exponent_bound}
    \begin{aligned}
        2^{-(\alpha-1)\E_\psi[M_\alpha(\psi)]} &= 2^{-(\alpha-1)\E_\varphi[M_\alpha(\varphi)]} + 2\eta(n) \\
        &\leq 2^{-(\alpha-1)\tau(n)} + 2\eta(n) \;.
    \end{aligned}
    \end{equation}
    Note that since $\E_\psi[\tr(\Pi^{(2\alpha)}\psi^{2\alpha})] = 2^{(1-\alpha)\E_\psi[M_\alpha(\psi)]}$ this also puts a bound on $\E_\psi[\tr(\Pi^{(2\alpha)}\psi^{\otimes 2\alpha})]$ and $\E_\psi[p_H^{(2\alpha)}(\psi)]$.\footnote{
    In the proof of Lemma S1 of~\cite{Gu_2024} it is shown that $\tr(\Pi^{(2\alpha)}\psi^{\otimes 2\alpha}) \in o((\poly n)^{-1})$ whenever $\E[M_\alpha(\varphi)] \in \Omega(n)$ for $\eta(n)\in\negl_\poly(n)$ (which is true for Haar-random ensemble $\{|\varphi\>\}$). 
    This can be obtained from eqn.~\eqref{eqn:stabilizer_entropy_exponent_bound} by setting $\tau(n)\in \Omega(n)$. 
    Then this gives us $\tr(\Pi^{(2\alpha)}\psi^{\otimes 2\alpha}) = 2^{(1-\alpha)\Omega(n)} + 2\eta(n) \in o((\poly n)^{-1})$ since $\eta(n) \in \negl_\poly(n) = o((\poly n)^{-1})$ and the $2^{(1-\alpha)\Omega(n)}$ term is dominated by $o((\poly n)^{-1})$.}
    
    By applying $\log$ to both sides of the preceding inequality and dividing both sides by $-(\alpha-1)$, we obtain a lower bound the expected stabilizer R\'enyi entropy of $\psi$ as
    \begin{equation}
    \begin{aligned}
        \E_\psi[M_\alpha(\psi)] &\geq -\frac{\log(2^{-(\alpha-1)\tau(n)} + 2\eta(n))}{\alpha-1} \\
        &= -\frac{\log(2\eta(n)) + \log\Big( 1 + \frac{2^{-(\alpha-1)\tau(n)}}{(2\eta(n))} \Big)}{\alpha-1} \\
        &= -\frac{\log(\negl_\mathbf{T}(n)) + \log\Big( 1 + \frac{2^{-(\alpha-1)\tau(n)}}{\negl_\mathbf{T}(n)} \Big)}{\alpha-1} \\
        &\geq -\frac{\log(\negl_\mathbf{T}(n)) + \frac{2^{-(\alpha-1)\tau(n)}}{\negl_\mathbf{T}(n)} }{\alpha-1} \;,
    \end{aligned}
    \end{equation}
    since $\log(a+b) = \log a + \log(1+b/a)$ and $\log(1+a)\leq a$ for all $0<a<1$ and since $2\eta$ is a $\mathbf{T}$-negligible function.
    Thus, we have shown the first statement of Proposition~\ref{prop:stabilizer_entropy_gap}.

    To show the stabilizer R\'enyi entropy gap in Proposition~\ref{prop:stabilizer_entropy_gap}, we take $\{|\varphi\>\}$ to be the Haar-random ensemble and $\{|\psi\>\}$ to be a $\mathbf{T}$-PRS.
    The expected stabilizer R\'enyi entropy of the Haar random state is given by~\cite[Lemma S2]{Gu_2024},\cite{leone2023clifford}
    \begin{equation}\label{eqn:stab_entropy_Haar}
        \E_\varphi[M_\alpha(\varphi)]=
        \begin{cases}
            n-2 + O(2^{-n}) , \quad&\text{for }\alpha=2 \\
            \frac{n}{\alpha-1} + O(2^{-n}) , \quad&\text{for }\alpha\geq3
        \end{cases} \;.
    \end{equation}
    Hence we can set $\E_\varphi[M_\alpha(\varphi)] = f(n)\in\Theta(n)$ and $\tau\in O(n)$ to obtain
    \begin{equation}
    \begin{aligned}
        \Delta_{M_\alpha}(\{\varphi\},\{\psi\}) &= \Big|\E_\varphi[M_\alpha(\varphi)] - \E_\psi[M_\alpha(\psi)]\Big| \\
        &\leq f(n) + \frac{\log(\negl_\mathbf{T}(n)) + \frac{2^{-(\alpha-1)\tau(n)}}{\negl_\mathbf{T}(n)} }{\alpha-1}
    \end{aligned}
    \end{equation}
    which concludes the proof.
\end{proof}

\begin{table*}[t]
    \centering
    \[
    \begin{array}{|c|c|c|}
        \hline
        \mathbf{T}(n) & \E_\psi[M_\alpha(\psi)] & \Delta_{M_\alpha}(\{\varphi\},\{\psi\}) \leq \\[0.5ex]
        \hline
        O(\poly n) & \frac{\omega(\log n)}{\alpha-1} & O(n) - \frac{\omega(\log n)}{\alpha-1} \\[0.5ex]
        O(n\log n) & \frac{\omega(1) + \log (n\log n)}{\alpha-1} & O(n) - \frac{\omega(1) + \log (n\log n)}{\alpha-1} \\[0.5ex]
        O(n) & \frac{\omega(1) + \log n}{\alpha-1} & O(n) - \frac{\omega(1) + \log n}{\alpha-1} \\[0.5ex]
        O(\poly\log n) & \frac{\omega(\log\log n)}{\alpha-1} & O(n) - \frac{\omega(\log\log n)}{\alpha-1} \\[0.5ex]
        O(\log n) & \frac{\omega(1) + \log\log n}{\alpha-1} & O(n) - \frac{\omega(1) + \log\log n}{\alpha-1} \\[0.5ex]
        \hline
    \end{array}
    \]
    \caption{
    Left column: Computational power $\mathbf{T}$ of the observer. 
    Center column: Expected stabilizer $\alpha$-R\'enyi entropy (for odd integer $\alpha>2$) of ensemble $\{|\psi\>\}$ that is $\mathbf{T}$-indistinguishable from $\{|\varphi\>\}$.
    We assume that
    Right column: Upper bound of stabilizer $\alpha$-R\'enyi entropy gap $\Delta_{M_\alpha}$ between Haar-random ensemble $\{|\varphi\>\}$ and ensemble $\{|\psi\>\}$.
    The quantities in the center and right columns can be obtained directly from Proposition~\ref{prop:stabilizer_entropy_gap} by setting $\tau(n) \in O(n)$ and setting $\mathbf{T}(n)$ as in the left column.
    }
    \label{tab:observer_computational_power_magic_gap}
\end{table*}

First, recall the stabilier R\'enyi entropy of the Haar-random ensemble $\{\varphi\}$ in eqn.~\eqref{eqn:stab_entropy_Haar}, $\E_\varphi[M_\alpha(\varphi)] \in O(n)$.
Thus if we set $\{|\varphi\>\}$ to be the Haar-random ensemble we can set $\tau(n)\in O(n)$.
So for any $\mathbf{T}(n)$ that grows shower than polynomials, it holds that 
\begin{equation}
\begin{aligned}
    \frac{2^{-(\alpha-1)\tau(n)}}{\negl_\mathbf{T}(n)} \in O(2^{-n}) \;.
\end{aligned}
\end{equation}

Now, similar to what we have done for relative entropy of coherence and entanglement entropy, we give a lower bound for expected stabilizer R\'enyi-$\alpha$ entropy of $n$-qubit ensemble $\{|\psi\>\}$ with low-magic that is $\mathbf{T}$-indistinguishable from $n$-qubit Haar-random ensemble $\{|\varphi\>\}$ for different $\mathbf{T}$ along with the magic gap between Haar-random ensemble and $\mathbf{T}$-PRS.
The results are summarized in Table~\ref{tab:observer_computational_power_magic_gap}.
\begin{enumerate}
    \item For $\poly$-time observers ($\mathbf{T}(n)=\poly n$), we have $\eta\in\negl_\poly$, i.e. $\eta(n)<2^{-\omega(\log n)}$ and $\alpha=t(n)\in O(\poly n)$, hence
    \begin{equation}
    \begin{aligned}
        &-\frac{\log(\negl_\mathbf{T}(n)) + \frac{2^{-(\alpha-1)\tau(n)}}{\negl_\mathbf{T}(n)} }{\alpha-1} > \frac{\omega(\log n)}{\alpha-1}
    \end{aligned}
    \end{equation}
    since $\frac{2^{-(\alpha-1)\tau(n)}}{\negl_\mathbf{T}(n)} \in O(2^{-n})$.
    
    \item For linearithmic-time observers, note that $f\in O(n\log n)$ means that for all $c>0$ there exists $N\in\N$ such that $n\geq N \rightarrow f(n) > cn\log n$.
    Thus for such function $f$, it holds that for all $c>0$ there exists $N\in\N$ such that $n\geq N \Rightarrow \log f(n) > \log c n\log n = \log c + \log(n\log n)$.
    In other words, $\log f(n) > \omega(1) + \log(n \log n)$.
    Thus, since we have $\eta\in\negl_{O(n\log n)}$ which implies that $\eta(n)<1/\omega(n\log n)$, we obtain
    \begin{equation}
    \begin{aligned}
        & -\frac{\log(\negl_{n\log n}(n)) + \frac{2^{-(\alpha-1)\tau(n)}}{\negl_{n\log n}(n)} }{\alpha-1} \\
        &= \frac{-\log \eta(n) - O(2^{-n})}{\alpha-1} \\
        &> \frac{\log \omega(n\log n)}{\alpha-1} \\
        &= \frac{\omega(1) + \log(n\log n)}{\alpha-1} \;.
    \end{aligned}
    \end{equation}

    \item For linear-time observers we have 
    \begin{equation}
    \begin{aligned}
        & -\frac{\log(\negl_\mathbf{T}(n)) + \frac{2^{-(\alpha-1)\tau(n)}}{\negl_\mathbf{T}(n)} }{\alpha-1} \\
        &= \frac{-\log \eta(n) - O(2^{-n})}{\alpha-1} \\
        &> \frac{\log \omega(n)}{\alpha-1} \\
        &= \frac{\omega(1) + \log n}{\alpha-1} \;.
    \end{aligned}
    \end{equation}

    \item For polylogarithmic-time observers we have
    \begin{equation}
    \begin{aligned}
        & -\frac{\log(\negl_\mathbf{T}(n)) + \frac{2^{-(\alpha-1)\tau(n)}}{\negl_\mathbf{T}(n)} }{\alpha-1} \\
        &= \frac{-\log \eta(n) - O(2^{-n})}{\alpha-1} \\
        &> \frac{\log 2^{-\omega(\log\log n)}}{\alpha-1} \\
        &= \frac{\omega(1) + \log\log n}{\alpha-1} \;.
    \end{aligned}
    \end{equation}

    \item For logarithmic-time observers we have
    \begin{equation}
    \begin{aligned}
        & -\frac{\log(\negl_\mathbf{T}(n)) + \frac{2^{-(\alpha-1)\tau(n)}}{\negl_\mathbf{T}(n)} }{\alpha-1} \\
        &= \frac{-\log \eta(n) - O(2^{-n})}{\alpha-1} \\
        &> \frac{\log \omega(\log n)}{\alpha-1} \\
        &= \frac{\omega(1) + \log\log n}{\alpha-1}\;.
    \end{aligned}
    \end{equation}
\end{enumerate}

\section{Discussion}

In this work, we extend the notion of pseudorandomness for quantum states from the regime of polynomial-time quantum computers to smaller sized quantum computers.
We propose a framework to construct $\mathbf{T}$-pseudorandom states ($\mathbf{T}$-PRS), a PRS that is computationally indistinguishable from Haar-random states to observers with quantum algorithms which runtime is bounded by a class of functions $\mathbf{T}$.
We derive criteria of such PRS for different classes of functions $\mathbf{T}$ that scales slower than polynomials and give explicit constructions.
Then we define the notion of $\mathbf{T}$-pseudorandom pair, which is a pair of quantum state ensembles possessing different amount of quantum resource, but are indistinguishable to observers with quantum algorithms which runtime bounded by $\mathbf{T}$.
For particular classes of functions $\mathbf{T}(n)$: linearithmic $O(n\log n)$, linear $O(n)$, polylogarithmic $O(\poly\log n)$, and logarithmic $O(\log n)$, we show that the necessary amount of quantum resources (coherence, entanglement, and magic) that the low-resource ensemble must have decreases with $\mathbf{T}(n)$.
As one can construct such a pair with $\mathbf{T}$-PRS and Haar-random ensemble, we further show how the gap between the Haar-random ensemble's resource and the $\mathbf{T}$-PRS's resource increases as $\mathbf{T}(n)$ decreases.
This demonstrated how $\mathbf{T}$-PRS can mimic high-resource states using lesser resource for computationally weaker observers .

Such parameterization with respect to some class of function $\mathbf{T}$ that bounds the computational power of the observer could in principle be extended to other quantum pseudorandom objects, such as pseudorandom density matrices~\cite{bansal2024pseudorandom}, pseudorandom function-like states~\cite{ananth2022pseudorandom,ananth2022cryptography}, pseudorandom unitaries~\cite{ji2018pseudorandom,haug2023pseudorandom,schuster2024random,ma2024construct}, and pseudorandom isometries~\cite{ananth2024pseudorandom}.
Such $\mathbf{T}$-pseudorandom density matrices, $\mathbf{T}$-pseudorandom function-like states, $\mathbf{T}$-pseudorandom unitaries, and $\mathbf{T}$-pseudorandom isometries can be constructed using our framework in Section~\ref{sec:pseudorandomness_and_indistinguishability} by (1) characterizing how many copies that the observer are allowed to have and (2) specifying the negligible probability of the observer distinguishing them from their respective truly random object.

On the other hand, interesting questions can be asked about pseudoresources and a full-fledged computational resource theory.
The field of resource theory~\cite{chitambar2019quantum,gour2024resources} study how quantum resources such as coherence, entanglement, and magic can be characterized and manipulated.
However, how much computational resource is required to prepare states and perform quantum operations have largely been left out of the picture.
We have shown in Section~\ref{sec:T-pseudoresources} that \textit{perceived} quantum resource is relative to how much computational resource the observer has access to.
It is interesting to explore on how one can formulate a \textit{computational resource theory}, where quantification of a quantum resource is relative to the computational power of the observer and where states and operations are further characterized by their computational complexity.
In such resource theory, which states and operations are considered as resourceful is relative to some computationally bounded observer.
Thus one can characterize the \textit{effective} amount of resource that a quantum state has relative to this observer. 
A recent work in this direction has been done for entanglement~\cite{arnon2023computational}, it would be interesting to see how an extension to other quantum resources and to a full computational resource theory where resourceful states and operations are characterized computationally can be made.

\section*{Acknowledgements}

AT is supported by the CQT PhD scholarship and the Google PhD fellowship. KB acknowledges support from Q.InC Strategic Research and Translational Thrust.

\bibliographystyle{unsrt}
\bibliography{references}

\end{document}